\newtheorem{observation}{Observation}[section]
\definecolor{crimsonglory}{rgb}{0,0,0}
\newtheorem{example}{Example}
 \newtheorem{theorem}{Theorem}[section]
 \newtheorem{lemma}[theorem]{Lemma}
 \newtheorem{definition}[theorem]{Definition}
\def\GrabProofArgument[#1]{ #1: \egroup\ignorespaces}
\def\proof{\noindent\textbf\bgroup Proof%
	\@ifnextchar[{\GrabProofArgument}{. \egroup\ignorespaces}}
\newcommand{\flower}{snowflake\xspace}
\newcommand{\Flower}{Snowflake\xspace}
\newcommand{\pistil}{center\xspace}
\newcommand{\fast}{\textit{$k$-broadcasting}\xspace}
\newcommand{\Fast}{\textit{$k$-broadcasting}\xspace}
\newcommand{\twofast}{\textit{$2$-broadcasting}\xspace}
\newcommand{\todo}[1] {}
\newcommand{\todoaida}[1]{}
\newcommand{\todoseyed}[1] {}
\newcommand{\tododone}[1] {}
\newcommand{\seyed}[1]{}
\newcommand{\aida}[1]{}
\newcommand{\todosumedha}[1]{}
\newcommand{\toask}[1]{}
\newcommand{\br}{\boldsymbol{br}}
\newcommand{\bropt}{{\boldsymbol{br}^*}}
\newcommand{\oh}{\mathcal{O}}
\newcommand{\bigell}{L}
\newcommand{\telebr}{\textsc{Telephone Broadcasting}\xspace}
\newcommand{\sat}{\textsc{SAT}\xspace}
\newcommand{\sattt}{\textsc{$3$-SAT}\xspace}
\newcommand{\sssattt}{\textsc{$3,4$-SAT}\xspace}
\newcommand{\tis}{\textsc{TwIS}\xspace}
\newcommand{\tislong}{\textsc{Twin Interval Selection}\xspace}
\newcommand{\dspr}{\textsc{DoSePr}\xspace}
\newcommand{\dsprlong}{\textsc{Dome Selection with Prefix Restrictions}\xspace}
\newcommand{\cater}{reduced caterpillar\xspace}
\newcommand{\cds}{\dspr}
\newcommand{\telebg}{\textsc{Telephone Broadguess}\xspace}
\newcommand{\singlefunc}[1]{\texttt{single-br(#1)}\xspace}
\newcommand{\doublefunc}[1]{\texttt{double-br(#1)}\xspace}
\newcommand{\singlefuncc}{\texttt{single-br}\xspace}
\newcommand{\doublefuncc}{\texttt{double-br}\xspace}
\newcommand{\ouralgo}{\textsc{Cactus Broadcaster}\xspace}
\newcommand{\pwidth}{w}
\newcommand{\sspan}{\text{span}}
\newcommand{\rrank}{\text{rank}}
\newcommand{\figcomponent}{
    \begin{figure}
	\centering
	\includegraphics[width=0.3\linewidth]{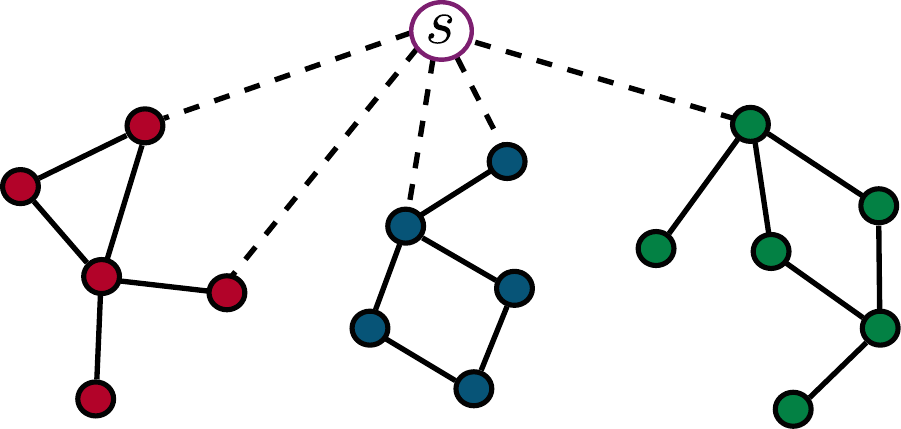}
	\caption{Two possible types for broadcasting to connected components after deleting $s$ in \singlefunc{$G,s$} are shown. The green vertices form a single-neighbor component. The red and blue components are double-neighbor components.}
    \label{fig:2approxSingleSource}

\end{figure}
}
\newcommand{\twoapproxnewfig}{

\begin{figure}
	\centering
	\includegraphics[width=0.46\linewidth]{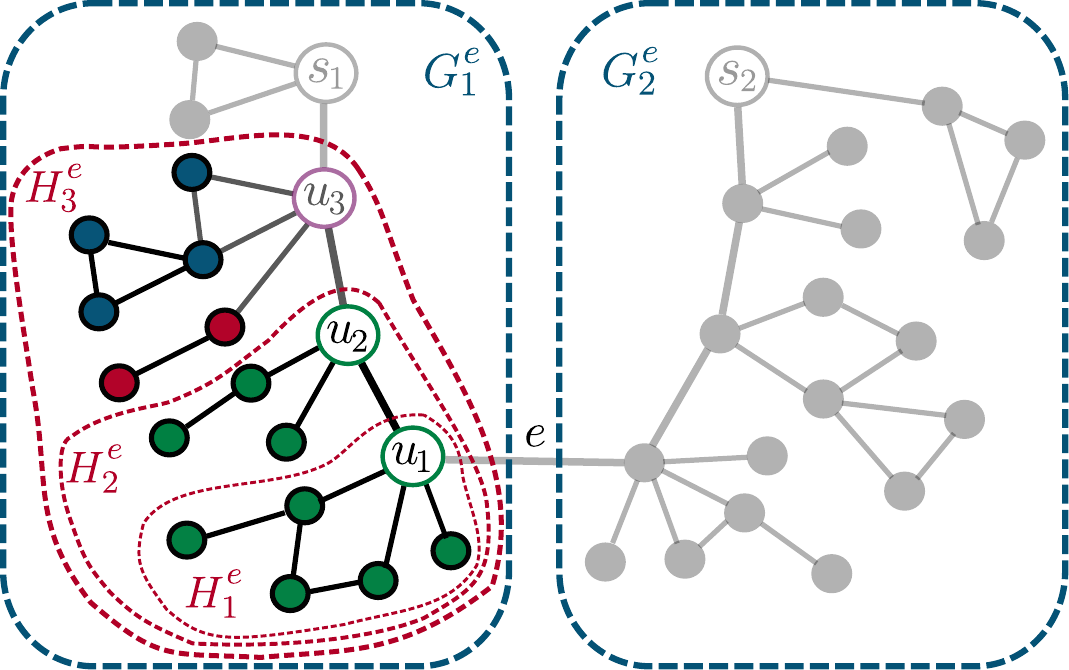}
	\caption{An illustration of Phase $2$ of the \doublefuncc method. Different components in iteration $i=3$ of Phase $2$ (for $u_3$) are highlighted in different colors. 
  }
    \label{fig:2approxDoubleMerge}
\end{figure}
}
\newcounter{proccnt}
\newcommand{\konote}[1]{}
\title{On the Complexity of Telephone Broadcasting \\ \Large{From Cacti to Bounded Pathwidth Graphs}}
\author{Aida Aminian}
\author{Shahin Kamali}
\author{Seyed-Mohammad Seyed-Javadi}
\author{Sumedha}
\affil{York University, Toronto, Canada \\
\texttt{\{aminian, kamalis, smjavadi\}@yorku.ca}, \texttt{sumedhab@my.yorku.ca}}
\begin{document}
	\newcommand{\ignore}[1]{}
\renewcommand{\theenumi}{(\roman{enumi}).}
\renewcommand{\labelenumi}{\theenumi}
\sloppy
\date{} 
\newenvironment{subproof}[1][\proofname]{
	  \renewcommand{\Box}{ \blacksquare}%
	\begin{proof}[#1]%
	}{%
	\end{proof}%
}
\newenvironment{subproof2}{
	\renewcommand{\Box}{ \blacksquare}%
	\begin{proof}%
	}{%
	\end{proof}%
}

\newlist{pfparts}{description}{1}
\setlist[pfparts,1]{%
  itemindent=2pt,
  wide,
  itemsep=0pt,topsep=2pt,
  labelsep=0.75ex
}


\maketitle

\thispagestyle{empty}
\allowdisplaybreaks

\begin{abstract}
In \telebr, the goal is to disseminate a message from a given source vertex of an input graph to all other vertices in the minimum number of rounds, where at each round, an informed vertex can send the message to at most one of its uninformed neighbors. For general graphs of $n$ vertices, the problem is NP-complete, and the best existing algorithm has an approximation factor of $\oh(\log n/ \log \log n)$. The existence of a constant factor approximation for the general graphs is still unknown. 

In this paper, we study the problem in two simple families of sparse graphs, namely, cacti and graphs of bounded pathwidth. There have been several efforts to understand the complexity of the problem in cactus graphs, mostly establishing the presence of polynomial-time solutions for restricted families of cactus graphs (e.g.,~\cite{vcevnik2017broadcasting, ehresmann2021approximation, harutyunyan2009necklace, Harutyunyan2007unicyclic, harutyunyan2008unicyclic, harutyunyan2023chainring}). Despite these efforts, the complexity of the problem in arbitrary cactus graphs remained open. We settle this question by establishing the NP-completeness of telephone broadcasting in cactus graphs. For that, we show the problem is NP-complete in a simple subfamily of cactus graphs, which we call \flower graphs. These graphs not only are cacti but also have pathwidth $2$. These results establish that, despite being polynomial-time solvable in trees, the problem becomes NP-complete in very simple extensions of trees.

On the positive side, we present constant-factor approximation algorithms for the studied families of graphs, namely, an algorithm with an approximation factor of $2$ for cactus graphs and an approximation factor of $\oh(1)$ for graphs of bounded pathwidth.
\end{abstract}
\section{Introduction}
The \telebr problem involves disseminating a message from a single given source vertex to all other vertices in a network through a series of \emph{telephone calls}.
The network is often modeled as an undirected and unweighted graph of $n$ vertices. Communication takes place in synchronous rounds. Initially, only the source is informed.
 During each round, any informed vertex can transmit the message to at most one of its uninformed neighbors via a ``call''. The goal is to minimize the number of rounds required to inform the entire network. \citet{hedetniemi1988broadsurvey} identifies \telebr as a fundamental primitive in distributed computing and communication theory, forming the basis for many advanced tasks in these fields. 
 
\citet{slater1981nptree} established the NP-completeness of \telebr. 
Nonetheless, efficient algorithms have been developed for specific classes of graphs. 
In particular, \citet{fraigniaud2002polynomial} demonstrated that the problem is solvable in polynomial time for trees. There are polynomial
algorithms for several other graph families; see, e.g., \cite{damaschke2024starclique, gholami2023fullytree, Liestman1988Boundeddegree, stohr1991butterfly}.

In this paper, we study cactus graphs, which are graphs in which

any two cycles share at most one vertex. 
Cacti are a natural generalization of trees and ring graphs, providing a flexible model for applications such as wireless sensor networks, particularly when tree structures are too restrictive~\cite{ben2012centdian}.
 
There have been several studies for broadcasting in specific families of cactus graphs \cite{vcevnik2017broadcasting, ehresmann2021approximation, harutyunyan2009necklace, Harutyunyan2007unicyclic, harutyunyan2008unicyclic, harutyunyan2023chainring}. For instance, for unicyclic graphs, a simple subset of cactus graphs containing exactly one cycle, the problem is solvable in linear time~\cite{Harutyunyan2007unicyclic}.
Similarly, \cite{harutyunyan2023chainring} proved that the chain of rings, which consists of cycles connected sequentially by a single vertex, has an optimal algorithm that runs in linear time. 
In $k$-restricted cactus graphs, where each vertex belongs to at most $k$ cycles, for a fixed constant $k$, \citet{vcevnik2017broadcasting} proposed algorithms that compute the optimal broadcast scheme 
in $\oh(n)$ time. Despite all these efforts, the complexity of \telebr for arbitrary cactus graphs has remained open, a question that we resolve in this paper. 

\telebr is NP-hard for general graphs~\cite{GareyJ79}, and even for restricted graphs. In particular, Tale recently showed that the problem remains NP-hard for graphs of pathwidth of $3$ \cite{tale2024double}, a result that naturally extends to graphs with higher pathwidths. However, the complexity of the problem for graphs of pathwidth $2$ has remained an open question, which we address in this paper.

\citet{elkin2002lowerbound} proved that approximating \telebr within a factor of $3 - \epsilon$ for any $\epsilon > 0$ is NP-hard. \citet{kortsarz1995approximation} showed that \telebr in general graphs has an approximation ratio of $\oh(\log n / \log \log n)$. 
It is possible that there is a constant factor approximation for general graphs. However, 
constant-factor approximation exists only for restricted graph classes such as unit disk graphs~\cite{shang2010unitdisk} and certain sub-families of cactus graphs such as $k$-cycle graphs~\cite{BhabakH15}.

\subsection{Contribution}
This paper investigates the complexity and approximation algorithms for the \telebr problem for cacti and graphs of bounded pathwidth. The key contributions are summarized as follows: \vspace{1mm}
\begin{itemize}
    \item We present a simple polynomial-time algorithm, named \ouralgo, and prove it has an approximation factor of 2 for broadcasting in cactus graphs (Theorem~\ref{thm:2approx}). Constant-factor approximations are known for certain subfamilies of cactus graphs (e.g., $k$-cycle graphs~\cite{BhabakH15}), and our result extends this to arbitrary cactus graphs.   
\ouralgo is  
reminiscent of the tree algorithm of~\cite{fraigniaud2002polynomial} and leverages the separatability of cactus graphs. 
In our analysis of \ouralgo, we use the \fast model of broadcasting~\cite{grigni1991tightkbroad} as a reference point, where each vertex can inform up to two neighbors in a single round. 

\item Our main contribution is to establish the NP-completeness of \telebr problem in ``\flower graphs'', which are a subclass of cactus graphs and also have pathwidth at most $2$, therefore resolving the complexity of the problem in these graph families (Theorem~\ref{thm:hardness-flower}). For a formal definition of \flower graphs, refer to Definition~\ref{def:ourflowergraph}. This hardness result is achieved through a series of reductions that start from \sssattt, a variant of the satisfiability problem that is NP-complete
\cite{tovey1984simplified}. 
    \item We show the existence of a constant-factor approximation for graphs of bounded pathwidth.  
    For that, we 
    show the algorithm of \citet{elkin2006sublogarithmic} has an approximation factor of 
$\oh(4^\pwidth)$ for any graph of constant pathwidth $\pwidth$ (Theorem~\ref{thm:approximation-constant}). Note that this result does not rely on having a path decomposition certifying a bounded pathwidth. 
Constant-factor approximation algorithms are known for certain families of graphs of bounded path-width such as 
$k$-path graphs, which admit a $2$-approximation algorithm~\cite{harutyunyan2023kpath}, and our result extends this to any graph of bounded pathwidth. 
\end{itemize}

\subsection{Paper Structure}
In Section~\ref{sec:preliminaries}, we present the preliminaries.
In Section~\ref{sec:2approx}, we propose \ouralgo, which gives a $2$-approximation 
for cactus graphs. In Section~\ref{sec:hardness}, we 
establish the NP-completeness of the problem in \flower graphs.
In Section~\ref{sec:pathwidth-approx}, 
we present a constant-factor approximation algorithm for graphs of bounded pathwidth. 
We conclude in Section~\ref{section:concluding}.

\section{Preliminaries} \label{sec:preliminaries}
For a positive integer $n$, we use notation $[n]$ to denote 
$\{1,2,\ldots, n\}$. Also, we use $[i,j]$ to refer to denote $\{i,i+1,\ldots, j\}$. 
For a graph $G$, we use $G\setminus \{v\}$ to refer to the subgraph of $G$ induced by all vertices of $G$ except $v$.
 \begin{definition}[\cite{hedetniemi1988broadsurvey}]
    An instance $(G,s)$ of the 
     \emph{\telebr} problem  
     is defined by a connected, undirected, and unweighted graph $G=(V, E)$ and a vertex $s \in V$, where $s$ is the only informed vertex. 
     The broadcasting protocol is synchronous and occurs in discrete rounds. In each round, an informed vertex can inform at most one of its uninformed neighbors. The goal is to broadcast the message as quickly as possible so that all vertices in $V$ get informed in the minimum number of rounds.
 \end{definition}

A broadcast scheme describes the ordering at which each vertex informs its neighbors. One can describe a broadcast scheme $S$ with a \emph{broadcast tree}, which is a spanning tree of $G$ rooted at source $s$; if a vertex $u$ is informed through vertex $v$, then $u$ will be a child of $v$ in the broadcast tree. Given that the optimal broadcast scheme of trees can be computed in linear time, a broadcast tree can fully describe the broadcast scheme. We use $\bropt(G,s)$ to refer to the number of rounds in the optimal broadcast scheme. 

\begin{definition}
    \emph{Cactus graphs} are 
    connected graphs in which any two simple cycles 
    have at most one vertex in common.
\end{definition}

\begin{definition}
\label{def:ourflowergraph}
A tree $T$ is said to be a \emph{\cater} if there are three special nodes $x, y,$ and $z$ in $T$ such that every node in $T$ is either located on the path between $x$ and $y$ or is connected to $z$. 

A graph $G$ is said to be a \flower, if and only if it has a \pistil vertex $c$ such that $G\setminus \{c\}$ is a set of disjoint reduced caterpillars 
such that $c$ is
connected to exactly two vertices in any of these caterpillars, none being special vertices.
\end{definition}

An example of \flower graphs and one of its corresponding reduced caterpillar components is shown in Figure~\ref{fig:flowergraphs_example}. Informally, a \flower graph is formed by a set of cycles that have a common \pistil $c$; moreover, each cycle has a special vertex ($z$ vertices). Any vertex in $G$ is either 
i) a part of one of the cycles or 
ii) a part of a ``dangling path'' connected to a neighbor of $c$ or 
iii) finds a special vertex as its sole neighbor.

\begin{figure}
    \centering
    \begin{minipage}[b]{0.42\textwidth}
        \centering
        \includegraphics[width=0.38\linewidth]{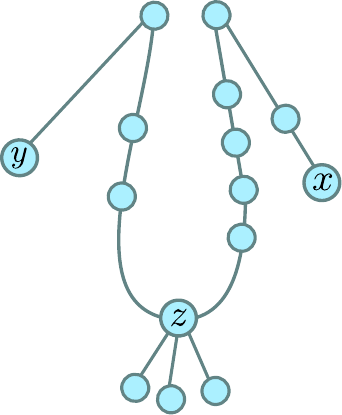} 
        \subcaption{An example of a reduced caterpillar, with the special vertices $x$, $y$, and $z$ highlighted} 
        \label{fig:flowerGraph_caterpillar}
    \end{minipage}
    \hfill
    \begin{minipage}[b]{0.52\textwidth}
        \centering
        \includegraphics[width=0.65\linewidth]{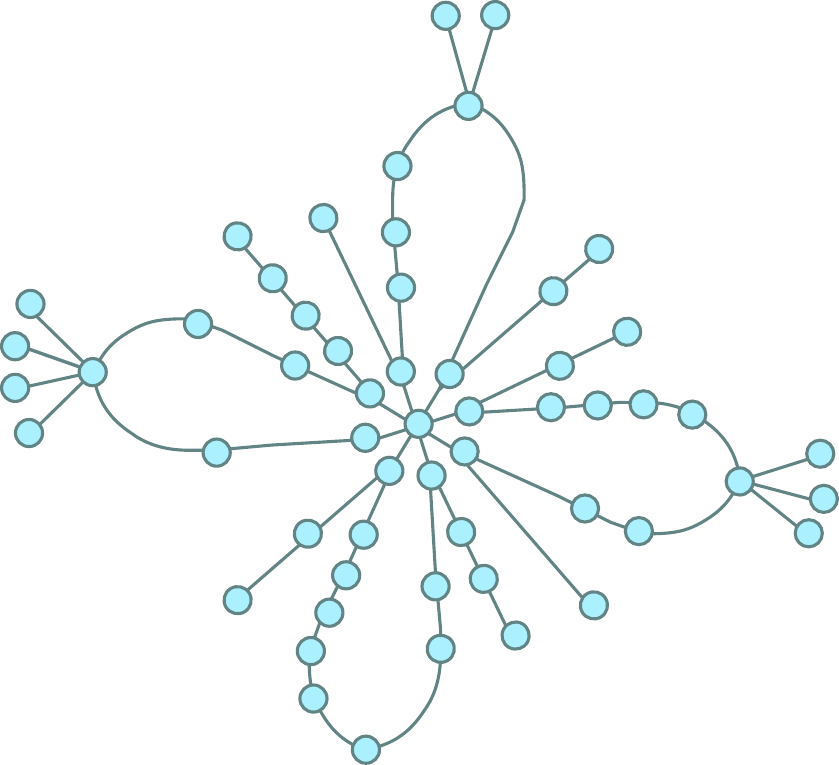} 
        \subcaption{An example of a \flower graph} 
        \label{fig:flowerGraph_cdsToB}
    \end{minipage}
    \caption{An illustration of reduced caterpillar and \flower graphs}
    \label{fig:flowergraphs_example}
\end{figure}

\newcommand{\tsub}{\textsubscript}

\begin{definition}[\cite{robertson1983pathwidth}]
A path decomposition $D$ of a given graph $G = (V,E)$ 
    is a sequence $\langle B_1, B_2, \dots, B_k\rangle$, where each $B_i$ is called a bag and contains a subset of $V$, such that every vertex $v \in V$ appears in at least one bag and, for every edge $(u, v) \in E$, there exists a bag $B_i$ containing both $u$ and $v$. Furthermore, 
    if a vertex $v$ appears in $B_i$ and in $B_j$, it must appear in any $B_k$ where $k\in[i,j]$.

    The width of the path decomposition $D$ is the maximum cardinality of its bags minus $1$. Now, $G$ is said to have pathwidth $w$ if it has a path decomposition of width at most $w$. 
\end{definition}

\begin{observation}
\label{obs:flower-pathwidth}
    \Flower graphs have a pathwidth of at most 2. 
\end{observation}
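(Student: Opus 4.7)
The plan is to exhibit an explicit path decomposition of width at most $2$, i.e., a sequence of bags, each of size at most $3$, satisfying the three path-decomposition axioms.

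First, I would unpack the structure of a reduced caterpillar $T$ with special vertices $x,y,z$. Since every vertex of $T$ is either on the $x$--$y$ path or adjacent to $z$, and since $z$ itself cannot be adjacent to $z$, the vertex $z$ must lie on the $x$--$y$ path. Moreover, any vertex $v$ not on the path has $z$ as a neighbor; if $v$ had any other neighbor $u$, then $u$ would be on the path or adjacent to $z$, and in either case one obtains a cycle through $v, z$, contradicting that $T$ is a tree. Hence each reduced caterpillar decomposes cleanly as a path $x = p_1, p_2, \dots, p_m = y$ containing $z = p_j$ for some $j$, together with a collection of leaves $\ell_1, \dots, \ell_s$ each attached only to $z$.

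Next, for a single reduced caterpillar I would use the standard caterpillar decomposition
\[
\{p_1,p_2\}, \{p_2,p_3\}, \dots, \{p_{j-1},p_j\}, \{p_j,\ell_1\}, \{p_j,\ell_2\}, \dots, \{p_j,\ell_s\}, \{p_j,p_{j+1}\}, \dots, \{p_{m-1},p_m\},
\]
whose bags have size $2$. Verifying the three axioms is immediate: every vertex appears in a bag, every edge (whether on the spine or to a leaf of $z$) appears in some bag, and each vertex sits in a contiguous block of bags.

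To handle the whole \flower graph $G$, I would concatenate these per-caterpillar sequences in any order and then insert the center $c$ into every bag. Contiguity for $c$ is automatic, contiguity for the other vertices is inherited from the single-caterpillar construction (since distinct caterpillars are vertex-disjoint in $G - c$), and the only new edges to verify are the $c$-to-caterpillar edges; each such edge $(c,v)$ is covered by any bag of the corresponding caterpillar block that contains $v$, which exists by construction. Every bag now has size at most $3$, so $G$ has pathwidth at most $2$.

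The main piece of actual work is the first paragraph, i.e., deducing the precise shape of a reduced caterpillar from the definition; after that, the decomposition is routine and there is no serious obstacle beyond a careful check of the three path-decomposition axioms.
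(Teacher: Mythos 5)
Your proof is correct and follows the same approach as the paper: remove the center $c$, observe that the remaining components are caterpillars of pathwidth $1$, concatenate their path decompositions, and add $c$ to every bag. You simply supply more detail than the paper does — in particular, explicitly deriving that a reduced caterpillar is a spine with leaves hanging only off $z$, and writing out the width-$1$ decomposition, whereas the paper takes the pathwidth-$1$ bound for caterpillars as a known fact.
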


\begin{proof}
    Removing the \pistil from a \flower graph results in a collection of disjoint caterpillar graphs, each having a pathwidth of $1$ (caterpillars are graphs of pathwidth $1$~\cite{ProskurowskiT99}). 
    A valid path decomposition for a \flower graph can be achieved by adding the \pistil center to all bags of path decompositions of these caterpillars. 
\end{proof}

\section{\ouralgo: A 2-Approximation for Cactus Graphs}
\label{sec:2approx}

In this section, we present our 2-approximation algorithm for cactus graphs. We use ideas from \fast model with parameter $k$~\cite{grigni1991tightkbroad}, where a vertex can inform up to $k$ of its neighbors in a single round via a 
\emph{super call}. It is easy to see that if one can complete $k$-broadcasting in $m$ rounds, then it is possible to complete broadcasting (in the classic setting) within $km$ rounds. This can be achieved by ``simulating'' a super-call with up to $k$ regular calls (see  Lemma~\ref{lemma:2approx2calls}).
For our algorithm, we use \fast with $k=2$. In particular, we design an algorithm for \twofast in a cactus graph $G$ and show that if it completes within $m$ rounds, then any broadcast scheme for classic broadcasting in $G$ takes at least $m$ rounds (see Lemma~\ref{lemma:2approxOPT}). Therefore, if we simulate every super call with two regular calls (in arbitrary order) using Lemma~\ref{lemma:2approx2calls}, the broadcasting completes within $2m$ rounds, and thus, we achieve an approximation factor of $2$. 

\subsection{\Fast Model}

In the \fast model, an informed vertex can simultaneously inform up to $k$ neighbors in a single round, a process we refer to as a \textbf{super call}.  
Creating networks that allow fast broadcasting under this model has been studied in previous work~\cite{harutyunyan2001improved}. 
We will present a method to convert a broadcasting schema in the \fast model into the classic model (without super calls). The final number of rounds in the classic model will be at most $k$ times the number of rounds in the \fast model. This model applies not only to cactus graphs but also to every arbitrary graph.

\begin{lemma} \label{lemma:2approx2calls}
Let $S_k$ be a 
broadcast schema for graph $G$ in the \fast model. It is possible to convert $S_k$ to a broadcast scheme $S$ in linear time for graph $G$ in the classic model such that broadcasting in $S$ completes within $k$ times the number of rounds as broadcasting in $S_k$, i.e., $\br(S) \leq k\cdot \br(S_k)$. 
\end{lemma}
\begin{proof}
Form $S$ from $S_k$ as follows. Suppose a vertex $v$ informs its neighbors at rounds $(1,2,\ldots, p)$ in $S_k$. In the $S$, $v$ informs the same neighbors that it informs in $S_k$ (i.e., they both have the same broadcast tree), except that each super-call in $S_k$ is replaced by up to $k$ regular calls in $S$, ensuring that if a neighbor $x$ is informed before a neighbor $y$ in $S_k$, then $x$ gets informed before $y$ in $S$ as well. For example, if a vertex $v$ first informs a neighbor $a$ with a regular call and then neighbors $b,c$ simultaneously with a super call right after getting informed in $S_k$, it will inform $a$ at round 1 and $b,c$ at rounds 2 and 3 (in arbitrary order) in $S$ (see Figure~\ref{fig:2approxFastRegular} for an illustration).

\begin{figure}
    \centering
    \begin{minipage}[b]{0.45\textwidth}
        \centering
        \includegraphics[width=0.55\linewidth]{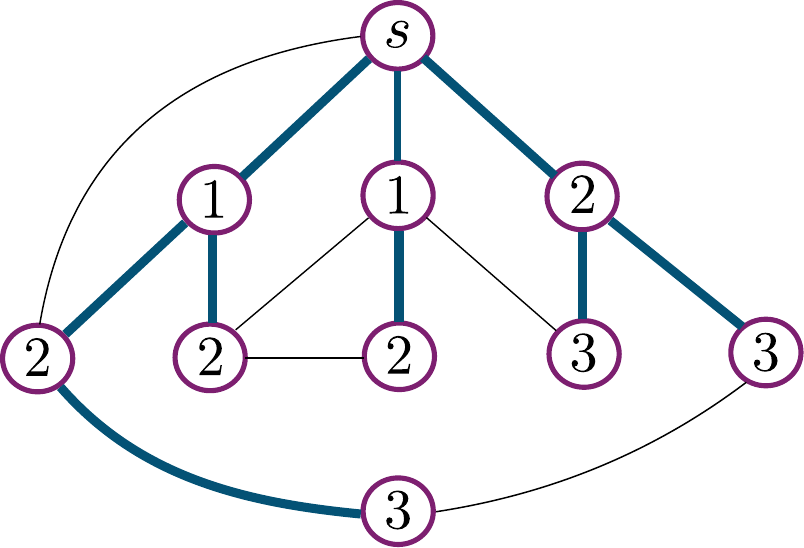} 
        \subcaption{The broadcast scheme in the \fast model with $k=2$
        } 
        \label{fig:2approxFastmodel}
    \end{minipage}
    \hfill
    \begin{minipage}[b]{0.45\textwidth} 
        \centering
        \includegraphics[width=0.55\linewidth]{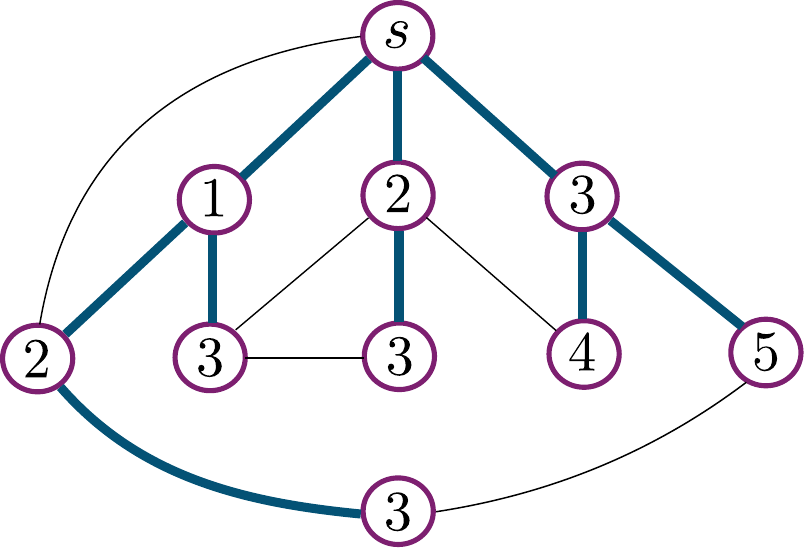} 
        \subcaption{The corresponding classic broadcasting scheme} 
        \label{fig:2approxRegular}
    \end{minipage}
    \caption{An illustration of Lemma~\ref{lemma:2approx2calls}. 
    Highlighted edges show broadcast trees, 
    and the numbers indicate the rounds at which vertices get informed.}
    \label{fig:2approxFastRegular}
\end{figure}

Next, we prove that broadcasting in $S$ completes within a factor $k$ of $S_k$. 
We let $t_{S_k}(x)$ (respectively $t_{S}(x)$) denote the round at which vertex $x$ gets informed in $S_k$ (respectively in $S$). 
We will use an inductive argument on the time that vertices get informed to show for any vertex $x$, we have $T_{S}(x) \leq k T_{S_k}(x)$. 
At $t=0$, we have 
$T_{S}(s) =  T_{S_k}(s) = 0$, and the base of the induction holds. Suppose a vertex $v$ is informed at time $t_{S_k}(v)$ in $S_k$, and 
assume $v$ informs its neighbor $x$ exactly $d$ rounds after it is informed. That is, $x$ gets informed in $S_k$ at time $T_{S_k}(x) = T_{S_k}(v) + d$. In $S$, $v$ informs $x$ at most $kd$ around after getting informed (because each super call is replaced by at most $k$ regular calls). 
Therefore, $T_{S}(x) \leq T_{S}(v) + kd$. On the other hand, by the induction hypothesis, we have $T_{S}(v) \leq k T_{S_k}(v)$. We can conclude that $T_{S}(x) \leq k T_{S_k}(v) + kd = k T_{S_k}(x)$, which completes the induction step.
\end{proof}

\subsection{\ouralgo Algorithm}
This section presents \ouralgo, a \twofast algorithm 
that yields our $2$-approximation algorithm for the \telebr problem in cactus graphs. 
\ouralgo is a mutually recursive algorithm with two main methods, namely, 
    \singlefunc{$G, s$} and
    \doublefunc{$G, s_1, s_2$}.
The \singlefunc{$G, s$} method is used for \twofast a connected subgraph $G$ starting from a source vertex $s$. The \doublefunc{$G, s_1, s_2$} method is used for \twofast a connected subgraph $G$ using two sources $s_1$ and $s_2$ under an assumption that both $s_1$ and $s_2$ have the message at time $0$ and there is a unique path between them in $G$.

\subparagraph*{Broadcasting from a single source.}
We explain how \singlefunc{$G, s$} operates. Observe that removing the source vertex $s$ partitions $G$ into one or more disjoint connected components. By the definition of cactus graphs, each of these connected components contains at most two neighbors of $s$ 
(see Observation~\ref{observation:two-neighbors}). We refer to a connected component with one neighbor (respectively two neighbors) of $s$ as a \emph{single-neighbor} (respectively \emph{double-neighbor}) component (see Figure~\ref{fig:2approxSingleSource}).

\figcomponent

\begin{observation}
\label{observation:two-neighbors}
Let $H$ be a connected induced subgraph of a cactus graph $G$. Any vertex $v \in G$ that is not in $H$ finds at most 2 neighbors in $H$.
\end{observation}
\begin{proof}
    Since $H$ is connected, having more than two neighbors for $v$ would imply an edge belonging to at least two cycles, which contradicts the definition of cactus graphs. 
\end{proof}

\singlefunc{$G,s$} computes the broadcast time of each single-neighbor component $C$ 
recursively by computing \singlefunc{$C, u$}, where $u$ is the single neighbor of $s$ in $C$. Similarly, it computes the broadcast time of each double-neighbor component $C$ by computing 
\doublefunc{$C, u, v$}, where $u, v$ are neighbors of $s$ in $C$. 
After calculating all these broadcasting times, \singlefunc{$G,s$} sorts these values in non-increasing order and informs the neighbors of $s$ in this order. For that, it uses a regular call for single-neighbor components and a super-call for double-neighbor components. 
In other words, \singlefunc{$G,s$},  informs 
the component with the largest broadcasting time in the first round, the component with the second largest broadcast time next, and so on. This is reminiscent of broadcasting in trees~\cite{fraigniaud2002polynomial}, except that super-calls are used for double components. Let $b_i$ denote the broadcast time of the $i$'th component $C_i$ in this order. 

The total broadcasting time for $\singlefunc{G,s}$ is then computed as $\max_i (i+b_i)$, which will be the output of $\singlefunc{G,s}$.

\subparagraph*{Broadcasting from two sources.} Next, we describe \doublefunc{$G, s_1, s_2$}. 
Let $P$ be the unique path between $s_1$ and $s_2$.
In any broadcast tree $T$ of
\doublefuncc, some vertices on $P$ are informed through $s_1$ and some through $s_2$. Thus, 
exactly one edge $e \in P$ is excluded from the tree. 
After removing such $e$, the graph $G$ will be partitioned into two disjoint connected components $G_1^{e}$ and $G_2^{e}$, which are respectively informed through $s_1$ and $s_2$ (See Figure~\ref{fig:2approxDoubleMerge}). 
The method \doublefuncc works by removing the edge $e^*$
with a minimum value of 
\linebreak $\max \{ \singlefuncc (G_1^{e},s_1), \singlefuncc (G_2^{e},s_2) \}$ 
over all $e\in P$. 

Next, we explain how \doublefuncc finds $e^*$. An exhaustive approach that 
tries all edges $e\in P$ and calls \singlefuncc twice per edge may take exponential time. 

To fix this, 
\doublefuncc works in two phases:
\begin{itemize}
    \item In {Phase} ${1}$, the method pre-computes the broadcast time for the connected subgraphs of $G$ after removing vertices of $P$ from $G$. By Observation~\ref{observation:two-neighbors}, 
every resulting connected component after removing $P$ has at most two neighbors in $P$. For any single-neighbor component $C$ with vertex $s$ connected to $P$, \singlefunc{$C,s$} is computed recursively. Similarly, for any double-neighbor component with vertices $s_1, s_2$ connected to $P$,  \doublefunc{$C,s_1,s_2$} is computed recursively.

\item Next, we describe {Phase} ${2}$. At each iteration of Phase $2$, we fix an edge $e=(u,v) \in P$ and aim to compute \singlefunc{$G_1^e,s_1$} and \singlefunc{$G_2^e,s_2$}, where $u\in G_1^e$ and $v\in G_2^e$. 
We explain how to efficiently compute \singlefunc{$G_1^e,s_1$}. Finding \singlefunc{$G_2^e,s_2$} is done similarly. 
Let $\langle u(=u_1), u_2, \ldots, u_k(=s_1)\rangle$ be the sequence of vertices in the unique path from $u$ to $s_1$.
Let $H^e_{i}$ be the connected component of $G$ that contains $u_i$ after removing the edge ($u_i, u_{i+1}$) from $G_1^e$. 
Note that $H^e_k = G_1^e$. 
We compute $b_i=$\singlefunc{$H^e_i, u_i$} using the values computed in Phase $1$ as follows in an iterative manner from $i=1$ to $i=k$. 

Consider all connected components of $H^e_i$ after removing $u_i$. Note that, for $i>1$, $H^e_{i-1}$ is one of these components. 
Consider the set $B$ of broadcast times for all these components, 
with the neighbor(s) of $u_i$ as the source(s) of broadcast. All these values are computed in Phase $1$ except for $b_{i-1}$, which is computed in the previous iteration. \doublefuncc sorts $B$ in non-increasing order of broadcast times and informs neighbors of $u_i$ accordingly (similar to \singlefuncc). As a result, in the $i$'th iteration, the broadcast time $b_i$ is computed as $\max_{j \in [B]} (j+B[j]) $. At iteration $i=k$, 
we compute $b_k$, which is indeed \singlefunc{$G^e_1, s_1$} (see Observation~\ref{obs:doublefunc}).
\end{itemize}
Figure~\ref{fig:2approxDoubleMerge} provides an illustration.

\twoapproxnewfig

\begin{restatable}{theorem}{twoapprox} \label{thm:2approx}
    \ouralgo runs a polynomial-time for \telebr and achieves an approximation factor of $2$ on cactus graphs.
\end{restatable}

\paragraph{Proof Overview.}
We provide an overview of the proof before a formal argument.
First, we show that \ouralgo runs in $\oh(n^3 \log n)$. This can be established using induction on the size of the input graph, based on the recursive nature of the algorithm. Details can be found in Lemma~\ref{lemma:2approxpoly}.

Second, we show that \ouralgo has an approximation factor of $2$. Intuitively, \singlefuncc with super-calls runs no longer than an optimal broadcast scheme without super-calls; this is because every super call only expedites informing one of the two sources of broadcasting in double-neighbor components. As mentioned earlier (Lemma~\ref{lemma:2approx2calls}), a broadcast scheme with super calls can be simulated with a scheme (without super-calls) that completes no later than twice the number of rounds. 
On the other hand, the selection of edge $e^*$ by \doublefuncc ensures that 
$
    \max (\singlefunc{$G^{e^*}_1, s_1$}, \singlefunc{$G^{e^*}_2, s_2$}) \leq \max (\singlefunc{$G^{e^+}_1, s_1$}, \singlefunc{$G^{e^+}_2, s_2$})
$, 
where $e^+$ is the edge that is absent in the optimal broadcast tree of $G$. 
A formal proof follows from an inductive argument on the input size (see Lemma~\ref{lemma:2approxOPT}).

\begin{observation}\label{obs:doublefunc}
Let $\langle u_1 (=u), u_2,\ldots, u_k (=s_1) \rangle$ be the vertices on the path from $u_1$ (an endpoint of the removed edge $e$) and the source $s_1$. Then, for any $i\in [k]$, the broadcast time $b_i$ for $(H^e_i,u_i)$ in \doublefuncc equals to \singlefuncc applied to $H_i$, that is, $b_i = $ \singlefunc{$H^e_i, u_i$}.
\end{observation}
\begin{proof}
    We use induction on $i$. For $i=1$, \doublefuncc simply uses the precomputed broadcast times $B_i$ of the connected components after removing the source $u_1$ from $H_1^e$ in Phase 1. It proceeds with sorting these broadcast times and setting $b_i = \max_{j\in[|B_i|]} (j+B_i[j])$. This is what \singlefuncc does by definition when broadcasting from $u_1$ in $H_1$. The inductive step is similar, except that when computing $b_i$, the value of $b_{i-1}$ is also in $B_i$. This value is indeed \singlefunc{$H^e_{i-1},u_{i-1}$}, by the induction hypothesis. 
\end{proof}

\begin{lemma} \label{lemma:2approxpoly}
\ouralgo terminates in $\oh(n^3 \log n)$ time.
\end{lemma}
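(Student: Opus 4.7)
The plan is to establish polynomial-time termination of \ouralgo by combining two claims: first, each individual invocation of \singlefuncc or \doublefuncc performs only polynomial local work, and second, the total number of distinct (subgraph, source(s)) subproblems encountered during its dynamic-programming execution is polynomial in $|V(G)|$. Since the authors describe \ouralgo as a dynamic-programming procedure, I interpret the recursion as implicitly memoized on the pair (subgraph, source(s)), so that each distinct subproblem is evaluated at most once.

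For the local work, I would argue as follows. A call to \singlefunc{$H, s$} only needs to compute the connected components of $H \setminus \{s\}$, identify the at-most-two neighbors of $s$ within each component (a consequence of $H$ being a cactus, as noted in the algorithm's description), sort the resulting broadcast times, and take a maximum; since a cactus on $|V(H)|$ vertices has $O(|V(H)|)$ edges, this costs $O(|V(H)|)$ time. A call to \doublefunc{$H, s_1, s_2$} iterates over at most $|V(H)|$ edges of the path joining $s_1$ and $s_2$, and for each edge $e$ constructs $G_1^e$ and $G_2^e$ in $O(|V(H)|)$ time, yielding total local work $O(|V(H)|^2)$ exclusive of recursive calls.

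For the bound on distinct subproblems, I would argue that every reachable subproblem has the form $(H, s)$ or $(H, s_1, s_2)$ with $H$ a connected subcactus of $G$ obtained by removing some ancestor source vertices and, for each ancestor \doublefuncc call, a cycle-breaking edge. Using the block-cut decomposition of $G$, I plan to show that the $(H, \text{source})$ pairs actually reached form a polynomial-size family: each arising $H$ is a subcactus rooted along the block-cut tree of $G$ and is parameterized by at most one cycle-breaking edge at its boundary, giving an $O(|V(G)|^{O(1)})$ count. Combining the local-work bound with this subproblem count then yields the claimed polynomial total running time.

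The hard part will be making this count of distinct subproblems precise. Without memoization, the recursion can duplicate work exponentially in the nested-cycle depth: on a chain of $k$ four-cycles, for instance, both iterations of each outer \doublefuncc call propagate an identical inner subcactus one level deeper, doubling the work at every level and producing $\Omega(2^k)$ total calls. The proof therefore must rely on memoization collapsing these repeated requests and on a structural enumeration of the canonical subcacti generated by the recursion; the block-cut tree of the cactus is the natural organizing tool, and I expect indexing subproblems by a cut vertex together with at most one chosen cycle edge to give the required polynomial bound.
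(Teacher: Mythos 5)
Your approach is genuinely different from the paper's. The paper never invokes memoization: it observes, as you do, that naively re-running \singlefuncc for every excluded edge inside \doublefuncc would blow up exponentially, and fixes this by reorganizing \doublefuncc itself. Concretely, it first recursively computes, once, the broadcast time of every subgraph hanging off the main cycle; then, for each candidate excluded edge, it sweeps along the cycle from the cut toward each source, folding these precomputed values into a running evaluation (the $t_{\text{acc}}$ accumulator). Because every hanging subgraph is recursed into exactly once, the running time obeys a divide-and-conquer recurrence on disjoint pieces, $T(n) \leq \max_{\sum n_i < n} \sum_i T(n_i) + O(n^2 \log n)$, which the paper closes by induction to $T(n) = O(n^3\log n)$ without ever needing to count repeated subproblems.

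Your memoization route could also work, but as you yourself flag, the decisive step is left unproven: you must show that the number of distinct $(H,\text{source})$ pairs reachable by the recursion is polynomial, and this is precisely the content of the lemma, not a side detail. Your proposed parameterization (a root along the block-cut tree together with at most one cycle-breaking edge on the boundary) is the right shape and would yield $O(n^2)$ subproblems, but two facts would need to be established to make it go through. First, no reachable subproblem carries more than one pending cycle break: once the recursion crosses a cut vertex into a new block, the handed-off subgraph is canonical and independent of how the previous cycle was cut --- this is exactly what collapses the $\Omega(2^k)$ call tree in your chain-of-cycles example, so it is load-bearing and must be argued, not assumed. Second, the subproblems that do still depend on a cycle break (those living on a contiguous arc of a single cycle, with their hanging subtrees) are indexed exactly by the current arc endpoint serving as source and by the excluded edge, and no further hidden state accumulates. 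Neither verification appears in the proposal, so as written there is a gap; filling it would give a valid, somewhat more modular alternative to the paper's restructure-and-recur argument, landing at the same $O(n^3\log n)$ bound.
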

\begin{proof}
To prove the lemma,
we define the following functions. The functions $T_s(n)$ and $T_d(n)$ represent the maximum time required to process a subgraph of size $n$ using \singlefuncc and \doublefuncc, respectively. Let $T(n)=\max (T_s(n), T_d(n))$.

\singlefunc{$G,s$} involves running the recursive methods with time $T(x)$ on any component of size $x$ after removing $s$,  
along with an additional $\oh(n\log n)$ time to sort the 
broadcasting times obtained from recursive calls. On the other hand, \doublefuncc includes precomputing the broadcast time for the subgraphs after deleting the vertices of path $P$ between $s_1$ and $s_2$. For each of $\oh(n)$ edges $e\in P$ that it considers for removing, we have $k$ iterations, each computing \singlefunc{$H_i,u_i$} (for $i\in [k]$). 
At each iteration, we sort the broadcast times for connected components of $H_i$ after removing $u_i$, which takes $\oh(d(u_i) \log d(u_i))$, where $d(u_i)$ is the degree of $u_i$ in $G^e_1$. Therefore, for a fixed edge, the total time would be $\sum_{i\in[k]} \oh(d(u_i) \log d(u_i)) = \oh(n \log n)$. Given that there are $\oh(n)$ possible edges to remove, the total time complexity of \doublefuncc would be $\oh(n^2 \log n)$ plus the additional time spent in Phase $1$.  

For each $n$, let $\mathcal{Q}_n$ be the
set of partitions of any integer less than $n$ to smaller integers. That is, any set of integers that sum to at most $n-1$ is a member of $\mathcal{Q}_n$.
To summarize, there are some $n_0,c' \geq 0$ such that for large $n \geq n_0$, we have:

\scalebox{.9}{
\begin{minipage}{1.2\textwidth}
\begin{align*}
    T_s(n) &\leq c'(n\log n) +\max_{S \in \mathcal{Q}_n} \sum_{x \in S} T(x), \\
    T_d(n) &\leq c'(n^2\log n) + \max_{S \in \mathcal{Q}_n} \sum_{x \in S} T(x).
\end{align*}
\vspace{1mm}
\end{minipage}}
Therefore,

\scalebox{.9}{
\begin{minipage}{1.05\textwidth}
\begin{align}
    T(n)=&\max (T_s(n), T_d(n)) \nonumber\\
    \leq& c'n^2\log n+ \max_{S \in \mathcal{Q}_n} \sum_{x \in S} T(x).\label{inequality:cactusalgo:time}
\end{align}
\vspace{1mm}
\end{minipage}}

We claim that $T(n) \leq cn^3 \log n$, in which $c=\max(c', T(n_0))$. To prove the claim, we use induction on $n$. For the base case of $n=n_0$, as $c\geq T(n_0)$, the inequality clearly holds.
Next, assume $T(n') \leq cn'^3 \log (n')$ for all $ n' \in [n_0,n-1]$. 

\scalebox{.9}{
\begin{minipage}{1.2\textwidth}
\begin{align*}
     T(n) &\leq  c'n^2\log n + \max_{S \in \mathcal{Q}_n} \sum_{x \in S} T(x) && \text{based on \eqref{inequality:cactusalgo:time}}\\
     & \leq c'n^2\log n  +\max_{S\in \mathcal{Q}_n} \sum_{x \in S} cx^3 \log (x) && \text{using induction hypothesis}\\ 
     & \leq c'n^2\log n + c (n-1)^3 \log n  && \text{$\forall_{S\in \mathcal{Q}_n} \sum_{x \in S} x<n$} \\ 
     & < c'n^2\log n +(cn^3 - cn^2) \log n  \\
     & \leq cn^3 \log n.  && \text{$c \geq c'$}
\end{align*}
\vspace{1mm}
\end{minipage}}
which completes the induction step, and we can conclude $T(n)\leq c n^3\log n$.
\end{proof}

\begin{lemma} \label{lemma:2approxOPT}
Consider an instance  $(G,s)$ of the broadcasting problem in a cactus graph $G$. Let $\br(G, s)$ denote the number of rounds that it takes for \ouralgo to complete \fast in $G$ with $k=2$, and $\bropt(G, s)$ denote the optimal number of rounds for broadcasting in the classic model. Then we have $\br(G, s)\leq \bropt(G, s)$.
\end{lemma}
\begin{proof}
    We use induction on the number of vertices in $G$ to prove the claim for both 
    methods \singlefuncc and \doublefuncc. In the base cases, the graph consists of just one (for \singlefuncc) or two (for \doublefuncc) source vertices, and the statement trivially holds. 
    
    For the induction step, we prove the claim for a graph with $n$ vertices, assuming it holds for all graphs with $n' < n$ vertices.

    First, we consider the simpler case of \doublefunc{$G, s_1, s_2$}. Recall that the algorithm 
    finds an edge $e^* \in P$ 
    that minimizes 

    \scalebox{.9}{
    \begin{minipage}{1.2\textwidth}
    \begin{align*}
        \max (\singlefunc{$G^{e^*}_1, s_1$}, \singlefunc{$G^{e^*}_2, s_2$}).
    \end{align*}
    \vspace{1mm}
    \end{minipage}}
    Assume that the optimal solution removes an edge $e^+ \in P$. Based on the induction hypothesis we know that $\bropt(G^{e^+}_1, s_1) \geq \singlefunc{$G^{e^+}_1, s_1$}$ and $\bropt(G^{e^+}_2, s_2) \geq \singlefunc{$G^{e^+}_2, s_2$}$. In this case, the broadcast time of the optimal solution would be

    \scalebox{.9}{
\begin{minipage}{1.2\textwidth}
    \begin{align*}
        \bropt(G,s) &= \max(\bropt(G^{e^+}_1, s_1), \bropt(G^{e^+}_2, s_2)) \\
        &\geq \max(\singlefunc{$G^{e^+}_1, s_1$}, \singlefunc{$G^{e^+}_2, s_2$}) && \text{induction hypothesis}\\
        &\geq \max(\singlefunc{$G^{e^*}_1, s_1$}, \singlefunc{$G^{e^*}_2, s_2$}) && e^*\text{ definition}\\
        &=\doublefunc{$G, s_1, s_2$}.
    \end{align*}
    \vspace{1mm}
\end{minipage}}

    Next, we address the case of the \singlefunc{$G, s$}. 
    Consider the optimal broadcast scheme $T^*$ in the classic model. For a component $C_i$ with up to two neighbors of $s$, let $\pi^1_i$ and $\pi^2_i$ be the rounds at which $s$ informs its first and its second (if exists) neighbor in $C_i$ in $T^*$, respectively. Note that $\pi^1_i < \pi^2_i$, and we let $\pi^2_i=\infty$ if $s$ informs just one vertex in $C_i$. If both neighbors of $s$ in $C_i$ were informed simultaneously via a super call in round $\pi^1_i$, the broadcast time would not increase compared to informing the first one at $\pi^1_i$ and the second one at round $\pi^2_i$. Hence, the following inequality holds.

    \scalebox{.9}{
    \begin{minipage}{1.2\textwidth}
    \begin{align*}
        \bropt(G,s) &= \max_i (\bropt(C_i, (\pi^1_i, \pi^2_i)))\\
        & \geq \max_i (\bropt(C_i, (\pi^1_i, \pi^1_i)))\\ 
        & = \max_i (\bropt(C_i, (0, 0))) + \pi^1_i, \nonumber 
    \end{align*}
    \vspace{1mm}
    \end{minipage}}
    where $\bropt(C_i, (\pi^1_i, \pi^2_i))$ is the optimal broadcasting time in the classic model given that the first source is informed at round $\pi^1_i$ and the second one (if exists) at round $\pi^2_i$.
    Recall $b_i$ is the broadcast time for each component $C_i$, which is determined by applying the recursive methods on $C_i$. This is equivalent to informing the neighbors of $s$ in $C_i$ at time $0$.

    \scalebox{.9}{
    \begin{minipage}{1.2\textwidth}
    \begin{align*}
        \bropt(G,s) &\geq \max_i (\bropt(C_i, (0, 0))) + \pi^1_i \\
        &\geq \max_i (b_i + \pi^1_i).  && \bropt(C_i, (0, 0)) \geq b_i \text{ (induction hypothesis)}
    \end{align*}
    \vspace{1mm}
\end{minipage}}
    
    \singlefuncc selects $\pi'_i$ such that $\max_i (b_i + \pi'_i)$ is minimized. This is because it sorts the broadcast times required to inform each subgraph in non-increasing order and informs the neighbor(s) of $s$ in $C_i$ accordingly. We can write
    
    \scalebox{.9}{
    \begin{minipage}{1.2\textwidth}
    \begin{align*}
        \bropt(G,s) &\geq \max_i (b_i + \pi^1_i) \\
        &\geq \min_{\pi'} (\max_i (b_i + \pi'_i))\\
        &= \br(G,s).
    \end{align*}
    \vspace{1mm}
\end{minipage}}
    Thus, $\bropt(G,s) \geq \br(G,s)$. Thus, the number of super rounds in both methods does not exceed the number of rounds used by the optimal broadcast scheme in the classic model.
\end{proof}

From the above results, we can establish the proof of Theorem~\ref{thm:2approx} as follows.

\begin{proof}[of Theorem~\ref{thm:2approx}]
\label{proof:2approx}
    By Lemma~\ref{lemma:2approxpoly}, we showed that \ouralgo terminates in polynomial time. Furthermore, Lemma~\ref{lemma:2approxOPT} establishes that the \twofast scheme of \ouralgo completes no later than the optimal broadcast time. Moreover, by Lemma~\ref{lemma:2approx2calls}, this \twofast scheme can be converted to a scheme in classic broadcasting that completes within twice the optimal broadcast time.   
\end{proof}

The approximation factor given by Theorem~\ref{thm:2approx} is tight. Consider a graph $G$ formed by $t$ triangles that all share an endpoint $s$. The broadcasting time of \ouralgo on $(G,s)$ is $2t$ as its broadcast tree will be a star with $2t$ leaves, while the optimal broadcast tree completes broadcasting in $t+1$ rounds ($s$ will have degree $t$). 
Thus, the approximation factor tends to $2$ for large $t$.

\section{Hardness Proof of \Flower Graphs}
\label{sec:hardness}

The NP-hardness proof of \telebr in \flower graphs proceeds through successive reductions. First, we reduce \sssattt, which is known to be NP-hard~\cite{tovey1984simplified} to a new problem that we call \tislong (Lemma~\ref{thm:1}), which itself reduces to another new problem \dsprlong (Lemma~\ref{thm:TIS2DOME}). Finally, we present a reduction from \dspr to \telebg, which establishes our main result (Theorem~\ref{thm:hardness-flower}).

\subsection{Hardness of \tislong}\label{sec:tishard}
The first step in reducing \sssattt to \telebg is to establish a reduction from \sssattt to \tislong (\tis). For that, we first explain how 
to construct a \tis instance from a given \sssattt instance and demonstrate the equivalence of solutions between the two problems. 

\begin{definition}[\cite{tovey1984simplified}]

The \emph{\sssattt} problem is a special type of the classic $3$-satsifiabiity (\sattt) problem. The input is a boolean CNF formula $\phi=(X, C)$, where $X$ is the set of variables and $C$ is the set of clauses 
such that every clause $C_i \in C$ contains \emph{exactly} three literals $\ell^i_k\in C_i$ for $k\in [3]$, and each variable $x\in X$ appears in at most four clauses. The decision problem asks whether there is a satisfying assignment of $\phi$.
\end{definition}

The \sssattt problem is known to be NP-complete \cite{tovey1984simplified}. 
In what follows, we call a pair of intervals ($I_i$, $\overline{I_i}$) for $i\in [n]$ a \emph{twin interval}, assuming that $I_i$ and $\overline{I_i}$ are non-crossing and have equal lengths with endpoints in $[m]$ for some positive $m$. The endpoints of $I_i$ are less than the endpoints of $\overline{I_i}$. We refer to $I_i$ and $\overline{I}_i$ the \emph{left} and the \emph{right} interval of the twin, respectively.

\begin{definition}

An instance of the \emph{\tislong (\tis)} is formed by a tuple $(I,r,m)$, where $I$ is a set of \emph{twin} interval pairs and $r$ is a \emph{restriction function} with domain $[m]$, where $m$ is referred to the \emph{horizon} of the instance. We have $I = \{(I_1, \overline{I_1}), \ldots, (I_n, \overline{I_n})\}$, where  for each $i\in [m]$, $(I_i, \Bar{I_i})$ are non-crossing intervals with endpoints in $[m]$.

The objective is to select exactly one of $I_i$ and $\overline{I_i}$, while respecting the restriction imposed by the restriction function as follows. The restriction function $r : [m] \to [n]$ requires that for any $t \in [m]$, the number of selected intervals that contain $t$ be at most $r(t)$. The decision problem asks whether there is a valid selection that satisfies these restrictions. 
\end{definition}
For a solution $S$ and $t\in[m]$, let $\Gamma_S(t)$ be the number of selected intervals in $S$ containing $t$.



\subparagraph*{Construction.} 
Suppose we are given an instance $\phi=(X, C)$ of \sssattt. We construct an instance $\mathcal{I}_\phi$ of \tis as follows. 
For each variable $x_i \in X$, 
form a pair of $(X_i,\overline{X_i})$ of twin intervals, each of length $7$, where $X_i = [16i - 15, 16i - 8]$ and 
$\overline{X}_i = [16i - 7, 16i]$. 
We refer to $X_i$ (respectively $\overline{X_i}$) as the \emph{principal} interval of literal $x_i$ (respectively $\neg x_i$). 
The length $7$ of these intervals allows for placing up to $4$ non-crossing unit intervals, each of length $1$, within the principal interval; these intervals will be used for clause gadgets, as we will explain.
For each clause $C_j$, we define three interval twin pairs $(I_k^j,\overline{I_k^j})$, each associated with one of the literals of $\ell^j_k\in C_j$ for each $k\in[3]$. Suppose $\ell^j_k$ is the $p$'th literal where its corresponding variable $x_i\in X$ appears ($p\in[4]$). If $\ell^j_k$ is a positive (respectively negative) literal of variable $x_i$, 
 then define $I_k^j$ as a unit interval starting at $16i-7+2(p-1)$ 
 (respectively, starting at $16i-15+2(p-1)$).
Intuitively, $I_k^j$ is defined as one of the four-unit intervals within the principal interval of $\neg x$.
Meanwhile, $\overline{I_k^j}$ is $[16n+2j, 16n+2j+1]$ for all $k \in [3]$. 
Finally, to define the restriction function, we let $r(t)=1$ for all $t\leq 16n$ and $r(t)=2$, otherwise. Figure~\ref{fig:SAT2TIS-example} illustrates an example of the above reduction. 

\begin{figure}
	\centering
	\includegraphics[width=\linewidth]{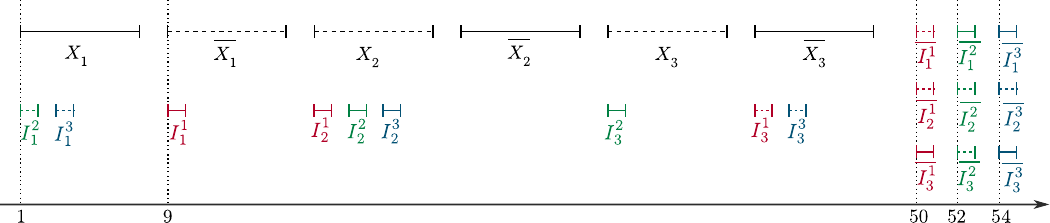}
	\caption{Construction of the instance $\mathcal{I}_\phi$ of the \tis problem 
    from the \sssattt instance $\phi = (x_1 \lor \neg{x_2} \lor x_3) \land (\neg{x_1} \lor \neg{x_2} \lor \neg{x_3}) \land (\neg{x_1} \lor \neg{x_2} \lor x_3) $. The restriction function $r(t)$ is defined as $r(t) = 1$ for $t \leq 48$ and $r(t) = 2$ otherwise. A satisfying assignment for $\phi$ is $\sigma(x_1)=1$ and $\sigma(x_2) = \sigma(x_3)=0$; the selected intervals in the corresponding solution for $\mathcal{I}$ are shown in solid color.}
    \label{fig:SAT2TIS-example}
\end{figure}

\subparagraph*{Correctness.}
First, we provide an intuitive explanation of why this reduction works. We argue that a satisfying assignment to instance $\phi$ of the 3,4-SAT problem bijects to a satisfying interval selection for the instance $\mathcal{I}_\phi$ of \tis. The bijection requires that for any literal $\ell$ that is true in a satisfying assignment of $\phi$, the principal interval of $\ell$ is selected in the solution for $\mathcal{I}_\phi$. 

The restrictions imposed by $r$ imply that whenever a principal interval of a literal $\ell$ is selected, the unit intervals associated with (up to $4$) occurrences of $\neg \ell$ 
cannot be selected (because $r(t) = 1$ in their intersecting points). Thus, whenever the principal interval of a literal $\ell$ is selected (i.e., its twin, which is the principal interval of $\neg \ell$ is not selected), one can select the unit intervals associated with any occurrence of $\ell$ in clauses where it appears; this is because they intersect with the principal interval of $\neg \ell$, which is not selected. Selecting these unit intervals means their twin unit intervals
This yields an equivalent between a satisfying assignment in the \sat formula and the \tis. 
We further note that the number of twin intervals in $\mathcal{I}_\phi$ is polynomial in $|X|$ and $|C|$. We can conclude the following lemma, which establishes the hardness of \tis in the strong sense based on the above reduction.

To establish the main result of this section,
we first establish the two sides of the reduction in the following lemmas. 

\begin{lemma} \label{lemma2_SATtoTIS}
If the answer to the instance $\mathcal{I}_\phi$ of \tis is yes, then the instance $\phi$ of \sssattt is satisfiable.
\end{lemma}

\begin{proof}
Let $S$ be a selection of intervals for $\mathcal{I}_\phi$ that certifies a yes answer to $\mathcal{I}_\phi$. We construct an assignment $\sigma: X \to \{0, 1\}$ for $\phi$ as follows: For each variable $x_i \in X$, set $\sigma(x_i) = 1$ if the principal interval $X_i$ is in $S$, and set $\sigma(x_i) = 0$, otherwise.
We will prove that $\sigma$ satisfies $\phi$. By construction, for each clause $C_j$ we know that all three unit intervals $\overline{ I^j_k}$ (for $k\in[3]$) contain point $t=16n+2j$, where $r(t)=2$. Thus, at least one of these three unit intervals is not included in $S$. Let $\overline{I^j_q}$ for some $q\in [3]$ be such interval that is not selected; that is, $I^j_q$ is in $S$. Because $I^j_q$ overlaps with the principal interval of $\neg {\ell^j_q}$ at some point $t \leq 16n$, and $r(t)=1$, the principal interval of $\neg {\ell^j_q}$ cannot be in $S$. Namely, $S$ contains the principal interval of $\ell^j_q$. Therefore, in the corresponding assignment of $\phi$, we have $\sigma(\ell^j_q)=1$, and thus the clause $C_j$ is satisfied. We conclude that each clause $C_j$ in $\phi$ has at least one true literal with true value $\sigma$, and thus $\sigma$ satisfies $\phi$.
\end{proof}

\begin{lemma} \label{lemma1_SATtoTIS}
If an instance $\phi=(X,C)$ of the \sssattt problem is satisfiable, then the answer to the instance $\mathcal{I}_\phi$ of \tis is yes.
\end{lemma}

\begin{proof}
Let $\sigma: X \to \{0, 1\}$ be a satisfying assignment for $\phi$. We construct a solution $S$ for $\mathcal{I}_\phi$, which contains the selected interval from each twin pair. For each variable $x \in X$, if $\sigma(x) = 1$, we select the principal interval of $x$ along with the left (respectively right) intervals of all the unit intervals associated with positive (respectively negative) literals of $x$. 
Similarly, if $\sigma(x) = 0$, we select the principal interval of $\neg x$, along with the left (respectively right) intervals of all the unit intervals associated with negative (respectively positive) literals of $x$.

Next, we prove that $S$ satisfies all restrictions in $\mathcal{I}_\phi$. 
First, we consider $t\leq 16n$, and show that the restriction imposed by $r(t)\leq 1$ is satisfied. In this range of $t$, at most two intervals in $\mathcal{I}_t$ contain $t$. 
We need to show whenever $S$ includes the principal interval of a variable $x$, it does not contain the unit intervals that intersect the principal interval of $x$. 
This holds because these unit intervals are associated with $\neg x$, which is not selected by the construction of $S$.
Consequently, the number of selected intervals at the intersection points of the principal interval of $x$ and the unit intervals is at most $1$, satisfying the restriction $r(t) = 1$ (for $t \leq 16n$).

Next, we consider $t>16n$ and show the restriction imposed by $r(t) = 2$ is satisfied. There is one unit interval for each literal in a clause $C_j$, i.e., for any $t>16n$, there are up to 3 intersecting intervals. Since $\sigma$ satisfies $\phi$, at least one literal in each clause receives a true value. 
Suppose that it is the $q$'th literal for $q\in [3]$. 
As a result, based on the construction of $S$, $\overline{I^j_q}$ is not selected. This ensures that the number of selected intervals at these points is at most two, satisfying the restriction $r(t) =2$. 

We conclude that the restrictions imposed by $r(t)$ are satisfied for all $t\in[m]$ and thus $S$ is a valid solution for $\mathcal{I}_\phi$.
\end{proof}


\begin{restatable}{lemma}{tishard}\label{thm:1}
Answering instances $(I,r,m)$ of the \tis problem is NP-hard even if the $m$ is polynomial in $|I|$. 
\end{restatable}

\begin{proof}
    We will use the reduction from \sssattt that was described in Section~\ref{sec:tishard}. This reduction takes polynomial time. 
In particular, given an instance of \sssattt, $\phi$ with $|X|$ variables and $|C|$ clauses, the number of twin pair intervals in the \tis is $|X|+3|C|$, and the domain of $r$ (the horizon) is $[m]$, where $m=16|X|+2|C|+1$, which is polynomial in the number of intervals in the \tis instance. We conclude that the reduction is polynomial in $|X|$ and $|C|$.

    Based on Lemma~\ref{lemma2_SATtoTIS} and  \ref{lemma1_SATtoTIS}, we have established that the \sssattt instance $\phi$ is satisfiable if and only if the constructed \tis instance $\mathcal{I}_\phi$ admits a valid selection. 
\end{proof}

 \subsection{Hardness of \dsprlong}\label{subsec:hardness:dspr}
We present a polynomial-time reduction from \tis to a new problem that we refer to as \dsprlong (\dspr).
Before defining the \dspr problem, we first define the notion of a \emph{dome} as follows.

\begin{definition}
    A \emph{dome} is formed by four positive integers $(a,b,c,d)$ such that $a \leq b < c \leq d$ and $b-a=d-c$. We refer to $(a,d)$ (respectively $(b,c)$) as an \emph{arc} with endpoints $a$ and $d$ (respectively $b$ and $c$). We refer to $(a,d)$ and $(b,c)$ the \emph{outer arc} and \emph{inner arc} of the dome, respectively. When $a=b$ and $c=d$, we call the dome a \emph{singleton dome} and otherwise call it a \emph{regular dome}.
\end{definition}

 In our figures, a singleton dome $(a,b)$ is shown as \includegraphics[scale=0.3]{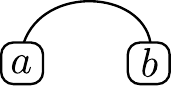} and regular dome $(a,b,c,d)$ is shown as \includegraphics[scale=0.3]{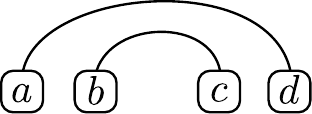}.

\begin{definition}
    An instance $(D,m)$ of the \emph{\dsprlong} (\dspr) is defined by a multiset $D= \{D_1,\ldots, D_n\}$ of domes and a positive integer $m$, where $D_i = (a_i,b_i,c_i,d_i)$ s.t. $d_i \leq m$ for some integer $m$ that we refer to as \emph{horizon}. The decision problem asks whether there is a multiset $S$ of arcs with exactly one arc from each dome $D_i$, such that, for any $t \in [m]$, it holds that $\mathcal{N}_S(t) \leq t$, where $\mathcal{N}_S(t)$ denotes the number of arc endpoints in $S$ with value at most $t$, counting each endpoint as many times as it appears across different arcs in $S$.
\end{definition}


\begin{example}\label{example:dspr}
        Consider an instance of the \dspr with domes \linebreak$D= \{(2,3,4,5), (3,4), (4,5,9,10), (7,8)\}$ and $m=10$. A valid solution for this instance is $S= \{(2,5), (3,4), (5,9),(7,8)\}$. For example, when $t=5$, we have $\mathcal{N}_S(t) = 5$, which is no more than $t$. 
        In particular, selected endpoints $\leq t$ are $\{2,5,3,4,5\}$.   
        Note that there are two endpoints with a value of $5$, which belong to two different arcs.
\end{example}

Next, we explain how the \tis problem reduces to the \dspr problem.

\subparagraph*{Construction.}
Given an arbitrary instance $\mathcal{I} = (I,r, m')$ of \tis, where $|I| = n'$ for some positive $n'$, we construct the corresponding instance $\mathcal{D} = (D,m)$ of \dspr as follows. For each twin pair of intervals $(I_i, \overline{I_i}) \in I$, where $I_i = (a', b')$ and $\overline{I_i} = (c', d')$, we construct a regular dome $D_i = (a_i,b_i,c_i,d_i)$ where $a_i = 6n'a', b_i = 6n'b'+3n', c_i = 6n'c',$ and $d_i = 6n'd'+3n'$. 
It is easy to verify that $D_i$ is indeed a dome, that is $b_i-a_i=d_i-c_i$ (see Observation~\ref{obs:indeeddome}).

We define $m$ to be a large enough horizon. In particular, we let $m = 164(n'm')^2$. For any $t \in [m]$ that is a multiple of $6n'$ and $t\leq 3n'(2m'+1)$, we further add $d(t)$ some singleton domes that all start at $t$ and end at $m$. In other words, we add $d(t)$ identical singleton domes $(t,m)$. Next, we explain how $d(t)$ is defined.  For convenience, we let $d(t) = 0$ if $t$ is not a multiple of $6n'$ or $t> 3n'(2m'+1)$. Suppose $t$ is indeed a multiple of $6n'$.
The idea is to define $d(t)$ in a way to project the requirements imposed by the restriction function $r$ in $\mathcal{I}$ to the requirement $\mathcal{N}_S(t)\leq t$ in \dspr.

Let $\mathfrak{D}_i^t$ be the set of regular domes with exactly $i$ endpoints before or at point $t$, and define $c(t) = 2|\mathfrak{D}_4^t| + |\mathfrak{D}_3^t| + |\mathfrak{D}_2^t|$. 
\begin{example}
   For $t = 18$, dome $(1,4,5,8) \in \mathfrak{D}_4^t$, dome $(11,14,16,19)\in \mathfrak{D}_3^t$, dome $(9,10,26,27) \in \mathfrak{D}_2^t$, dome $(17,20,21,24)\in \mathfrak{D}_1^t$, and dome $(25,26,27,28) \in \mathfrak{D}_0^t$. Figure~\ref{fig:dometypes} provides an illustration.
\end{example}

\begin{figure}
	\centering
	\includegraphics[width=0.85\linewidth]{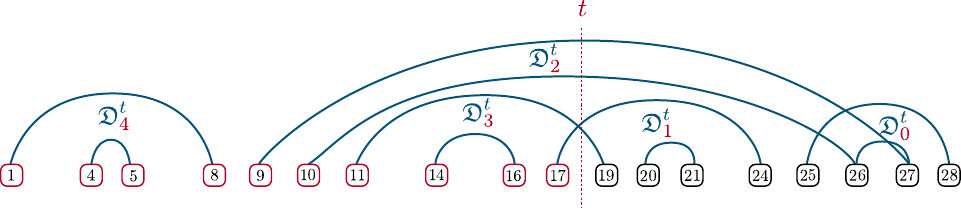}
	\caption{An example of different types of domes based on their positions relative to $t=18$}
	\label{fig:dometypes}
\end{figure}

Intuitively, this definition of $c$ implies that there are $c(t)$ arc endpoints with value at most $t$ in any valid solution $S$ (a set of arcs), 
regardless of the choices made to form $S$. This is because:
\begin{itemize}
    \item (i) all arc endpoints of domes in $\mathfrak{D}^t_4$ are at most $t$, and two of them (associated with one arc) contribute to $c(t)$.
    \item (ii) three arc endpoints of domes in $\mathfrak{D}^t_3$ are at most $t$, and any such dome contributes at least 1 to $c(t)$. \item 
(iii) the left endpoints of both arcs of domes in $\mathfrak{D}^t_2$, and since one arc is in $S$, the dome contributes exactly $1$ to $c(t)$.  
\end{itemize}

Finally, we let 

\scalebox{.9}{
\begin{minipage}{1.2\textwidth}
\begin{align*}
d(t) = t - c(t) - \sum_{j < t} d(j) - r(t/(6n')).
\end{align*}
\vspace{1mm}
\end{minipage}}
The scaling argument that is applied when forming the \dspr instance from \tis instance ensures that $d(t)$ is indeed non-negative for any $t\in [m]$ (see Observatio~\ref{obs:dt_nonnegative} for details).
This completes our construction of the \dspr instance. In a nutshell, we have added one regular dome per twin interval in the \tis instance, and for any $t\in[m]$, we added extra identical singleton domes to capture the requirements imposed by the restriction function in the \tis instance. 

\begin{example}
\label{example::drpsrTis}
    Consider an instance $\mathcal{I} = (I,r,m')$ of the \tis problem with \linebreak $I = \{((1,3)(4,6)),((2,3),(5,6)), ((2,3),(4,5))\}$ and $m'=6$. Suppose $r(1)=r(2)=r(6)=3$,  $r(3)=r(4)=1$, and $r(5)=2$. The corresponding instance $\mathcal{D}=(D,m)$ of \dspr has regular domes $D= \{(18,63,72,117),(36,63,90,117),(36,63,72,99)\}$ and $m=164(n'm')^2=164\cdot(3\cdot6)^2 = 53,136$ (see Figure~\ref{fig:dsprTis}). In addition, $3n'(2m'+1) = 117$ for any $t \in [1,117]$ that is a multiple of $6n'=18$, $d(t)$ singleton domes $(t,m)$ are added. Some examples of $d(t)$ are shown as follows.

    \scalebox{.9}{
    \begin{minipage}{1.2\textwidth}
    \begin{align*}
        &d(18)=18-0-0-3=15&&c(18)=0\\
        &d(36)=36-0-15-3=18&&c(36)=0\\
        &d(54)=54-0-(18+15)-1=20&&c(54)=0\\
        &d(72)=72-3-(15+18+20)-1=15 &&c(72)=3\\
    \end{align*}
    \end{minipage}}
\end{example}

\subparagraph*{Correctness.}
First, we provide some intuitions about the correctness of the reduction. 
We show that any valid solution $S$ for instance $\mathcal{D}$ of \dspr bijects to a valid solution $S'$ of $\mathcal{I}$. 
Note that $S$ contains exactly one arc, the inner or the outer arc, of each regular dome $D_i$ of $\mathcal{D}$ (in addition to singleton domes, which are contained in any valid solution for $\mathcal{D}$). Now, $S'$
selects the left interval $I_i$ when the outer arc of $D_i$ is selected in $S$ and the right interval $\overline{I_i}$ when the inner arc of $D_i$ is selected in $S$ (see Figure~\ref{fig:dsprTis}).
We let $\Gamma_{S'}(t')$ denote the number of intervals selected by $S'$ that include $t'\in [m']$.
We show that $S$ is a valid solution for $\mathcal{D}$ if and only if $S'$ is a valid solution to $\mathcal{I}$.

\begin{figure}
    \centering
    \begin{minipage}[b]{0.42\textwidth}
        \centering
        \includegraphics[width=0.75\linewidth]{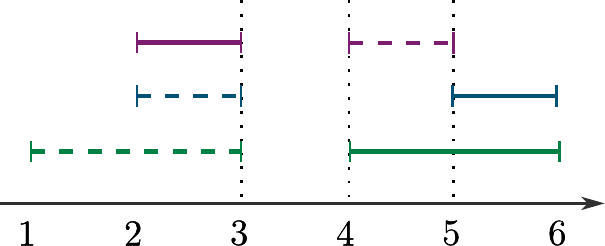} 
        \subcaption{The instance $\mathcal{I}$ of \tis} 
        \label{fig:dsprTis1}
    \end{minipage}
    \hfill
    \begin{minipage}[b]{0.53\textwidth} 
        \centering
        \includegraphics[width=0.93\linewidth]{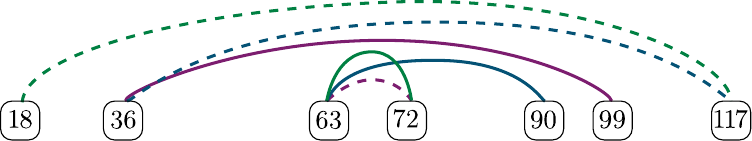} 
        \subcaption{The corresponding instance $\mathcal{D}$ of \dspr (only regular domes are depicted.) }
        \label{fig:dsprTis2}
    \end{minipage}
    
    \caption{An illustration of the construction of an instance $\mathcal{D}$ of \dspr from an instance $\mathcal{I}$ of \tis, as explained in Example~\ref{example::drpsrTis}. Solid lines show the selected intervals in a valid solution of $\mathcal{I}$ and the selected arcs in the corresponding solution in $\mathcal{D}$.}
    \label{fig:dsprTis}
\end{figure}

Suppose $S$ is a valid solution for \dspr. Then we must have $\mathcal{N}_S(t) \leq t$ for any $t\in[m]$, where $\mathcal{N}_S(t)$ is the number of arc endpoints in $S$ with value at most $t$. As mentioned earlier, $c(t)$ endpoints will be present in any valid solution regardless of the choices to form $S$. Moreover, $S$ must contain all endpoints of singleton domes in $\mathcal{D}$ for $j\leq t$. That is, any valid solution for $\mathcal{D}$, including $S$, must contain $c(t) + \sum_{j\leq t} d(j) $ arc endpoints.  
In addition to these fixed points, there will be more arcs in $S$, which are contributed by the domes in $\mathfrak{D}^t_1$ and $\mathfrak{D}^t_3$ as follows.
    Let $D_i$ be a regular dome in $\mathcal{D}$, and let $(I_i,\overline{I}_i)$ be its corresponding twin intervals in $\mathcal{I}$.
\begin{itemize}
    \item Suppose $D_i\in \mathfrak{D}^t_1$. Now, if  
    $S$ contains the inner 
    arc of $D_i$, the contribution of $D_i$ to $\mathcal{N}_S(t)$ would be $0$. 
    This is equivalent to including the right interval $\overline{I_i}$ in $S'$, and thus 
    the 
    twin interval $(I_i,\overline{I_i})$ does not contribute $\Gamma_{S'}(t')$ for $t'=t/6n$. Informally, the contribution of the dome $D_i$ to the left-hand side of inequality $N_S(t) \leq t$ and the contribution of $(I_i,\overline{I_i})$ to the left-hand side of the inequality $\Gamma_{S'}(t')\leq r(t')$ 
    will be both $0$. 
    Similarly, if $S$ contains the outer arc of $D_i$, the contribution of $D_i$ to $\mathcal{N}_S(t)$ would be $1$.
    This is equivalent to including the left interval $\overline{I_i}$ in the \tis instance, and  $(I_i,\overline{I_i})$ contribute $1$ item to 
    $\Gamma_{S'}(t')$. Informally, the contribution of the dome $D_i$ to the left-hand side of inequality $\mathcal{N}_S(t) \leq t$ and the contribution of $(I_i,\overline{I_i})$ to the left-hand side of 
    $\Gamma_{S'}(t/{(6n')}) \leq r(t')$
    will be both $1$.  

    \item Suppose $D_i\in \mathfrak{D}^t_3$. Assume $S$ contains the inner (respectively the outer) arc of $D_i$. In this case, in addition to the fixed $1$ unit of the contribution of $D_i$ to $\mathcal{N}_S(t)$ (captured in $c(t)$), it further contributes $1$ (respectively $0$) unit to $\mathcal{N}_S(t)$. 
    This is equivalent to including the right interval $\overline{I}_i$ (respectively the left interval $I_i$) in $S'$. 
    In this case, the number of selected intervals in $S'$ that intersect $t$ is increased by $1$ (respectively $0$). Intuitively, both left-hand sides of $\mathcal{N}_S(t) \leq t$ and $\Gamma_{S'}(t/{(6n')}) \leq r(t')$ increase by $1$ (respectively $0$). 
\end{itemize}

We note that $n=|D|$ is polynomial in both $n'$ and $m'$, and thus $m=164(n'm')^2$ is polynomial in $n$. Therefore, we can conclude the following lemma, which establishes the NP-hardness of \dspr. We start the formal proof with the following observation showing that the regular domes in $\mathcal{D}$ are indeed valid. 

\begin{observation}\label{obs:indeeddome}
For any $i\in[n']$, any constructed $D_i = (a_i,b_i,c_i,d_i)$ in the reduction is a valid dome, i.e., $b_i - a_i = d_i - c_i$. 
\end{observation}
\begin{proof}
For each twin pair of intervals $I_i = (a_i',b_i')$ and $\overline{I_i} = (c_i',d_i')$ we know that $b_i'-a_i' = d_i' - c_i'$ by the definition of the \tis problem. Multiplying by $6n'$ and adding $3n'$ to each side, we have $(6n'b_i' + 3n') - 6n'a_i' = (6n'd_i' + 3n') - 6n'c_i'$. Based on the construction of $D_i$, $(6n'b_i' + 3n') - 6n'a_i'=b_i-a_i$ and $(6n'd_i' + 3n') - 6n'c_i'=d_i-c_i$.
\end{proof}

Next, we observe that the number $d(t)$ of extra singleton domes added for any $t\in [m]$ is non-negative.

\begin{observation}\label{obs:dt_nonnegative}
   In the construction of $\mathcal{D}$ from $\mathcal{I}$, for any $t\in [m]$, we have $d(t) \geq 0$. 
\end{observation}
\begin{proof}
    For any $t$ that is not divisible by $6n'$ or $t> 3n'(2m'+1)$, we have $d(t)=0$, and the lemma holds for these values of $t$. For other values of $t$, which are all $6n'$ or larger, we first establish that $d(t)\geq 5n' - c(t)$.  For $t=6n'$, we have $d(t) = 6n' - c(6n') - r(1) \geq 5n'-c(6n')$ (for any value of $t$, we have $r(t)\leq n'$). In what follows, we assume $t\geq 12n'$ and deduce

    \scalebox{0.8}{
    \begin{minipage}{1.2\textwidth}
    \begin{align}
        d(t) &= t-c(t) - \Sigma_{j<t}d(j) - r(t/(6n')) \nonumber\\
        &= t - c(t) - \Sigma_{j\leq t-6n'}d(j) - r(t/(6n')) \label{eq:l2}\\ 
        &= t - c(t) - d({t-6n'}) - \Sigma_{j<t-6n'}d(j) - r(t/(6n')) \nonumber\\
        &= t - c(t) - ( (t-6n') - c(t-6n') - \Sigma_{j<t-6n'}d(j) -r(\frac{t-6n'}{6n'}) ) - \Sigma_{j<t-6n'}d(j) - r(t/(6n')) \label{eq:l4}\\
        &= -c(t) + 6n' +c(t-6n') + r(\frac{t-6n'}{6n'}) - r(t/(6n')).\nonumber
    \end{align} \vspace{1mm}
    \end{minipage}
    }
    
    In Equality \eqref{eq:l2}, we used the fact that singleton domes are added only for multiples of $6n'$. In Equality \eqref{eq:l4}, we used the definition of $d(t-6n')$. 
    As $c({t-6n'})$ and $r(\frac{t-6n'}{6n'})$ are both non-negative, and by definition, $r(t/(6n'))$ is at most $n$, we can conclude that

    \scalebox{.9}{
    \begin{minipage}{1.2\textwidth}
    \begin{align*}
        d(t) \geq -c(t) + 6n' - r(t/(6n')) \geq 5n' - c(t).
    \end{align*}
    \vspace{1mm}
    \end{minipage}}
    Moreover, we can bound $c(t)$ as follows: 
    
    \scalebox{.9}{
    \begin{minipage}{1.2\textwidth}
    \begin{align*}
        c(t) = 2|\mathfrak{D}_4^t| + |\mathfrak{D}_3^t| + |\mathfrak{D}_2^t| \leq 2(|\mathfrak{D}_4^t| + |\mathfrak{D}_3^t| + |\mathfrak{D}_2^t|)\leq 2n'.
    \end{align*}
    \vspace{1mm}
    \end{minipage}}
    Thus, $d(t) \geq 5n' -c(t) \geq 3n' \geq 0$.  
\end{proof}

For each dome $D_i \in D$, let the boolean variable $x_i$ indicate whether the outer arc of $D_i$ is selected in $S$ or not. Specifically, $x_i = 1$ means the outer arc is selected, while $\neg{x_i} = 1$ means the inner arc is selected. 

The following two lemma express the values of $\mathcal{N}_S(t)$ and $\Gamma_{S'}(t)$ in a way that allows us to relate a solution $S$ for the \dspr instance $\mathcal{D}$ to its associated solution $S'$ in the \tis instance $\mathcal{I}$. These lemmas will later help us in proving the correctness of the reduction. 

\begin{lemma}
\label{lemma:tool:TIS2DOME}
Let $\mathcal{D}=(D,m)$ be the \dspr problem constructed from an instance $\mathcal{I}$ of the \tis problem, and let $S$ be a multiset of arcs that contains exactly one arc from each dome $D_i \in D$. For any $t\in[m-1]$, the number of arc endpoints in $S$ with value at most $t$ is exactly $\mathcal{N}_S(t) = \sum_{i \in \mathfrak{D}_1^t} x_i + \sum_{i \in \mathfrak{D}_3^t} \neg{x_i} +|\mathfrak{D}_2^t| +|\mathfrak{D}_3^t| + 2|\mathfrak{D}_4^t| + \sum_{j \leq t} d(j)$.
\end{lemma}
\begin{proof}
Fix a value of $t\in [m-1]$. We consider five cases, depending on the positions of domes relative to $t$, to compute the contribution of different domes to $\mathcal{N}_s(t)$. See Figure~\ref{fig:dometypes} for an illustration. 

\begin{itemize}
    \item Domes with all the endpoints after $t$, namely domes that are in $\mathfrak{D}_0^t$, do not contribute to $\mathcal{N}_S(t)$.

    \item Each dome with exactly $1$ endpoint (the left endpoint of the outer arc) before or on $t$, namely, a dome in $\mathfrak{D}_1^t$, adds exactly one unit to $\mathcal{N}_S(t)$, which happens when the outer arc of $D$ is selected. Thus, the total contribution of domes in $\mathfrak{D}_1^t$ to  $\mathcal{N}_S(t)$ is $\sum_{i \in \mathfrak{D}_1^t} x_i$.
    
    \item For each dome with exactly $2$ endpoints (the left endpoint of each arc) before or on $t$, namely, domes in $\mathfrak{D}_2^t$, selecting either of them adds one endpoint to $\mathcal{N}_S(t)$. Therefore, the contributions of domes in $\mathfrak{D}_2^t$ to $\mathcal{N}_S(t)$ is exactly $|\mathfrak{D}_2^t|$.

    \item Consider a dome $D$ with exactly $3$ endpoints before or on $t$, namely, a dome in $\mathfrak{D}_3^t$. The left endpoint of the arc of $D$ that is present in $S$ contributes $1$ unit to $\mathcal{N}_S(t)$ (regardless of the outer or inner arc being present in $S$). Moreover, if the inner arc of $D$ is selected, there will be another unit of contribution. Therefore, the contributions of  domes in $\mathfrak{D}_3^t$ to $\mathcal{N}_S(t)$ is exactly $|\mathfrak{D}_3^t| + \sum_{i \in \mathfrak{D}_3^t} \neg{x_i}$.

    \item For each dome with all $4$ endpoints before or on $t$, namely domes in $\mathfrak{D}_4^t$, for each of the arcs, its two endpoints are before or on $t$, thus domes in $\mathfrak{D}^t_4$ contribute $2|\mathfrak{D}_4^t|$ endpoints to $\mathcal{N}_S(t)$. 
    
\end{itemize}
Adding the singleton domes added before or at point $t$, which is $d(t)$, we can write $\mathcal{N}_S(t) = \sum_{i \in \mathfrak{D}_1^t} x_i + \sum_{i \in \mathfrak{D}_3^t} \neg{x_i} +|\mathfrak{D}_2^t| +|\mathfrak{D}_3^t| + 2|\mathfrak{D}_4^t| + \sum_{j \leq t} d(j)$.
\end{proof}

\begin{lemma}\label{lem:gamUpBound}
Let $\mathcal{D}=(D,m)$ be the \dspr problem constructed from an instance $\mathcal{I}=(I,r,m')$ of the \tis problem, and let $S'$ be a selection of intervals from $\mathcal{I}$ that contains exactly one interval from each twin interval $(I_i,\overline{I_i}) \in I$. 
For any $t'\in[m-1]$, we have 
$\Gamma_{S'}(t') = \sum_{i \in \mathfrak{D}_3^t} \neg{x_i} +  \sum_{i \in \mathfrak{D}_1^t} x_i  $, where $t=6n't'$.
\end{lemma}

\begin{proof}
    Consider a regular dome $D_i = (a, b, c, d)$ added in $\mathcal{D}$ for the twin intervals $(a', b')$ and $(c', d')$ in $\mathcal{I}$.
For a given $t' \in [m-1]$, let $t = 6t'n$. Recall that  $c = 6n'c'$, and $d = 6n'd' + 3n$ by the construction of $\mathcal{D}$ from $\mathcal{I}$. One can assert that $t'\in [a', b']$ if and only if $D_i \in \mathfrak{D}^t_1$. Therefore, the term $\sum_{i \in \mathfrak{D}_1^t} x_i$ 
  corresponds to the twin intervals in $\mathcal{I}$ whose left interval 
contributes to $\Gamma_{S'}(t')$. 
Similarly, one can assert that $t' \in [c', d']$ if and only if $D_i \in \mathfrak{D}_3^t$. Therefore, the term $\sum_{i \in \mathfrak{D}_3^t} \neg{x_i}$ represents the domes in $\mathfrak{D}_3^t$ for which the inner arc is selected; this corresponds to the twin intervals in $\mathcal{I}$ whose right interval 
contributes to $\Gamma_{S'}(t')$. 
Summing the two terms, we get 
  $\Gamma_{S'}(t') = \sum_{i \in \mathfrak{D}_3^t} \neg{x_i} +  \sum_{i \in \mathfrak{D}_1^t} x_i $.
\end{proof}

Now, we are ready to prove the correctness of the reduction. 

\begin{lemma}
\label{lemma:DSPR2TIS}
If the answer to the instance $\mathcal{D}=(D,m)$ of \dspr is yes, then the answer to its corresponding instance $\mathcal{I}=(I,r,m')$ of \tis is also yes. 
\end{lemma}
\begin{proof}
Consider a solution $S$ that certifies a yes answer for the instance $\mathcal{D}$ of \dspr. We will construct a solution $S'$ for $\mathcal{I}$ from $S$ and prove that it is a valid solution for $\mathcal{I}$. For each regular dome $D_i$ in $\mathcal{D}$  and its corresponding twin intervals $(I_i, \overline{I_i})$ in $\mathcal{I}$, if the inner arc of $D_i$ is selected in $S$, we select $\overline{I_i}$ in $S'$. Similarly, if the outer arc of $D_i$ is selected by $S$, we select $I_i$ in $S'$. 

We need to show that for each $t'\in[m']$, the number of selected intervals in $S'$ intersecting $t'$ is at most $r(t')$. Let $t= 6n't'$. Since $S$ is a valid solution for $\mathcal{D}$, it holds that $\mathcal{N}_S(t)\leq t$. Using Lemmas ~\ref{lemma:tool:TIS2DOME} and~\ref{lem:gamUpBound}, we can write the following. 

\scalebox{.8}{
\begin{minipage}{1.2\textwidth}
    \begin{align}
     \Gamma_{S'}(t')&=\sum_{i \in \mathfrak{D}_3^t} \neg{x_i} +  \sum_{i \in \mathfrak{D}_1^t} x_i && \text{Lemma~\ref{lem:gamUpBound}}\\ 
     & = \mathcal{N}_S(t) - (|\mathfrak{D}_2^t| +|\mathfrak{D}_3^t| + 2|\mathfrak{D}_4^t| + \sum_{j \leq t} d(j)) \label{eq:DSPR2TIS} && \text{Lemma~\ref{lemma:tool:TIS2DOME}}\\
     & \leq t - (|\mathfrak{D}_2^t| +|\mathfrak{D}_3^t| + 2|\mathfrak{D}_4^t| + \sum_{j \leq t} d(j)) && \mathcal{N}_S(t) \leq t \label{eq:DSPR2TIS3}\\
    &\leq t- c(t) - \sum_{j \leq t} d(j)  && \text{by definition of }c(t)\\  
     & = t - c(t) - \sum_{j < t} d(j) - d(t) \nonumber\\ 
     & = t - c(t) - \sum_{j < t} d(j) - (t - c(t) - \sum_{j < t} d(j) - r(t/(6n'))) && \text{by definition of }d(t)\nonumber\\
     & = r(t'). \label{eq:DSPR2TIS2}
\end{align} \vspace{1mm}
\end{minipage}}

Therefore, $\Gamma_{S'}(t') \leq r(t')$ for any $t'\in[m]$ and thus $S'$ is a valid solution for instance $\mathcal{I}$.
\end{proof}

\begin{lemma}
\label{lemma:TIS2DSPR}
If the answer to the instance $\mathcal{I}=(I,r,m')$ of \tis is yes, then the answer to its corresponding instance $\mathcal{D}=(D,m)$ of \dspr is also yes.
\end{lemma}
\begin{proof}
Consider a solution $S'$ that certifies a yes answer for the instance $\mathcal{I}$ of \tis. We will construct a solution $S$ from $S'$ and prove that it is a valid solution for $\mathcal{D}$. For each regular dome $D_i$ in $\mathcal{D}$ and its corresponding twin interval pair $(I_i,\overline{I_i})$ in $\mathcal{I}$, if the left interval $I_i$ is included in $S'$, we include the outer arc in $S$, and if the right interval $\overline{I_i}$ is in $S'$ we include the inner arc in $S$. Clearly, all singleton intervals are also included in $S$. This ensures that exactly one arc from each dome is in $S$. 

In order to show that $S$ is a valid solution for $\mathcal{D}$, we will show that for each $t\in [m]$, $\mathcal{N}_{S}(t) \leq t$. By the construction of $\mathcal{D}$, all singleton domes find their right endpoints at $m$. Let $q=6n'm'+3n'$ and observe that all arcs of regular domes find their endpoint at $\leq q$. We establish $\mathcal{N}_{S}(t) \leq t$ using different arguments depending on the value of $t$ relative to $q$.
\begin{itemize}

\item Suppose $t \in [1,q]$ and $t$ is a multiple of $6n'$. Since $S'$ is a valid solution for $\mathcal{I}$, it must hold that, for each $t'\in[m']$, $\Gamma_{S'}(t')\leq r(t')$.
consider the points that ares Based on Lemma~\ref{lemma:tool:TIS2DOME}, we can rewrite the inequality on point $t = 6n't'$ in the \dspr instance as follows.

\scalebox{.8}{
\begin{minipage}{1.15\textwidth}
    \begin{align}
     \mathcal{N}_S(t) &= \sum_{i \in \mathfrak{D}_3^t} \neg{x_i} +  \sum_{i \in \mathfrak{D}_1^t} x_i  + (|\mathfrak{D}_2^t| +|\mathfrak{D}_3^t| + 2|\mathfrak{D}_4^t| + \sum_{j \leq t} d(j)) && \text{Lemma~\ref{lemma:tool:TIS2DOME}}\label{eq:TIS2DSPR}\\
     &= \Gamma_{S'}(t')  + (c(t) + \sum_{j \leq t} d(j)) &&\text{Lemma~\ref{lem:gamUpBound}}\\
&   
     \leq r(t')+ c(t) + \sum_{j \leq t} d(j)  && \Gamma_{S'}(t')\leq r(t')\\  
     & = r(t') + c(t) + \sum_{j < t} d(j) + d(t) \nonumber\\ 
     & = r(t') + c(t) + \sum_{j < t} d(j) + (t - c(t) - \sum_{j < t} d(j) - r(t/(6n'))) && \text{by definition of }d(t)\nonumber\\
     & = t. \label{eq:TIS2DSPR2}
\end{align} \vspace{1mm}
\end{minipage}}

\item Suppose $t\in[1,q]$ and $t$ is multiple of $3n'$ but not a multiple of $6n'$. 
Let $t_p = t-3n'$ and note that $t_p$ is a multiple of $6n'$. Therefore, we can write (from the previous case) $\mathcal{N}_{S}(t_p) \leq t_p$.
We claim that $\mathcal{N}_{S}(t) \leq \mathcal{N}_{S}(t_p) + 2n'$. 
By Lemma~\ref{lemma:tool:TIS2DOME}, $\mathcal{N}_{S}(t)=\sum_{i \in \mathfrak{D}_3^t} \neg{x_i} +  \sum_{i \in \mathfrak{D}_1^t} x_i  + c(t) + \sum_{j \leq t} d(j)$. Moreover, we have $\sum_{j\leq t} d(j) = \sum_{j\leq t_p} d(j)$ because singleton domes start only in multiples of $6n'$. On the other hand, the sum of the three other terms, namely, $\sum_{i \in \mathfrak{D}_3^t} \neg{x_i} +  \sum_{i \in \mathfrak{D}_1^t} x_i  + c(t)$ is at most $2n'$ for any $t$ (because the number of regular domes is exactly $n'$). Therefore, $\mathcal{N}_S(t)-\mathcal{N}_{S}(t_p) \leq 2n'$ (in an extreme case, when the sum of the three terms is $0$ in $t_s$ and $2n'$ in $t$, the difference between $\mathcal{N}_S(t)$ and $\mathcal{N}_S(t_p)$ will be $2n'$).
We can conclude that $\mathcal{N}_S(t) \leq \mathcal{N}_S(t_p) + 2n' \leq t_p + 2n' \leq t$.

\item Suppose $t\in [1,q]$ and $t$ is not a multiple of $3n'$. Let $t_s$ be the largest multiple of $3n'$ that is smaller than $t$. Since all arcs endpoints are on multiple of $3n'$, we can write $\mathcal{N}_S(t) = \mathcal{N}_S(t_s) \leq t_s < t$. The first inequality holds because we established $\mathcal{N}_S(t_s) \leq t_s$ for values of $t_s$ that are multiples of $3n'$ in previous cases. 

    \item Suppose 
$t \in (q, m)$. Based on our construction, there is no arc endpoint at $t'$ where $q < t' < m$. Thus, in order to show $\mathcal{N}_{S}(t) \leq t$, it suffices to show that $\mathcal{N}_{S}(q)\leq q$, which we will establish in the next case (where $t \leq q)$ 
\item 
Suppose $t=m$. The contributions of each dome to $\mathcal{N}_S(t)$ is exactly 2. Given that we have $n'$ regular domes and $\sum_{t\in[q]}d(t)$ singleton domes, we can write $\mathcal{N}_S(m)=2n'+\sum_{t\in [q]}2d(t)$. Therefore, the following holds.

\scalebox{.9}{
\begin{minipage}{1.2\textwidth}
\begin{align*}
 \mathcal{N}_S(m) &=   2n'+\sum_{t\in [q]}2d(t) \\
 &\leq 2n'+\sum_{t\in [q]}2t && d(t)\leq t\\
    & \leq 2n' + 2q^2\\
    & = 2n'+2(6n'm'+3n')^2 && q=6n'm'+3n'\\
    &\leq 164(n'm')^2 = m.
\end{align*}
\vspace{1mm}
\end{minipage}}

\end{itemize}

 In conclusion, we have $\mathcal{N}_{S}(t)\leq t$ for all values of $t\in[m]$, and we can conclude $S$ certifies that $\mathcal{D}$ is a yes instance of the \dspr problem. 
\end{proof}

From the above lemmas, we can conclude the main result of this section.
\begin{restatable}{lemma}{Domeishard}\label{thm:TIS2DOME}
Answering instances $(D,m)$ of the \dspr problem is NP-hard even if $m$ is polynomial in $|D|$.
\end{restatable}

\begin{proof}
We note that $n=|D|$ is polynomial in both $n'$ and $m'$. Moreover, $m=164(n'm')^2$, which is polynomial in $n'$ and $m'$ (and thus $n$). 
The reduction takes polynomial time in $m$ and $n$ and hence is polynomial in $m'$ and $n'$. Finally, based on Lemma~\ref{lemma:DSPR2TIS} and \ref{lemma:TIS2DSPR}, we have established that the instance $\mathcal{I} = (I,r,m')$ is a yes-instance of \tis if and only if the corresponding instance $\mathcal{D} = (D,m)$ is a yes-instance of the \dspr problem. This completes the proof of equivalence and validates the reduction from \tis to \dspr.
\end{proof}

\subsection{Hardness of \telebg in \Flower Graphs} \label{subsection:telebghard}

We refer to the decision variant of the broadcasting problem as the \telebg problem with instances $(G,s,\rho)$, where $G$ is the input graph, $s$ is the source, and $\rho$ is a positive integer. The decision question asks whether it is possible to complete broadcasting in $G$ from $s$ within $\rho$ rounds. This section presents our final reduction, from \cds to \telebg in \flower graphs.

\subparagraph*{Construction.} 
Given an instance $\mathcal{D}=(D,m)$ of \cds we construct an instance $\mathcal{T}=(G, r, \rho=2m)$ of \telebg, where $G$ is a \flower graph with \pistil $r$. For a given $x \in [\rho]$, let $\tilde{x} = \rho - x$. We construct the graph $G$ by creating a \cater (see Definition~\ref{def:ourflowergraph}) for each dome in $\mathcal{D}$ as follows. 
\begin{itemize}
    \item The \cater of a singleton dome $(a,b)$ is a path of length $2\tilde{a}+2\tilde{b}$ with endpoints $p$ and $q$ (here, $p,q,z$ are special vertices in the \cater, where $z$ is any arbitrary vertex). 
    \item The \cater of a regular dome $(a,b,c,d)$ is formed by a path of length $2\tilde{b}+2\tilde{d}+1$ with endpoints $p$ and $q$, which are two special vertices of the \cater. The other special vertex $z$ of the \cater is the vertex at distance $2\tilde{b}$ of $p$ (and thus distance $2\tilde{d}$ of $q$). In addition to the vertices on the path between $p$ and $q$, the \cater includes $\tilde{a} - \tilde{b}$ (which equals $\tilde{c}-\tilde{d}$) other vertices which find $z$ as their sole neighbor.
    \end{itemize}

\begin{figure}
    \centering
    \begin{minipage}[b]{0.35\textwidth}
        \centering
        \includegraphics[width=.9\linewidth]{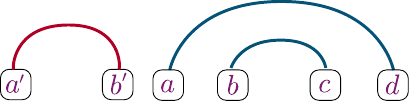} 
        \subcaption{An instance $\mathcal{D}$ of the \cds instance}
        \label{fig:regular_cdsToB}
        
    \end{minipage}
    \hfill
    \begin{minipage}[b]{0.6\textwidth} 
        \centering
        \includegraphics[width=.97\linewidth]{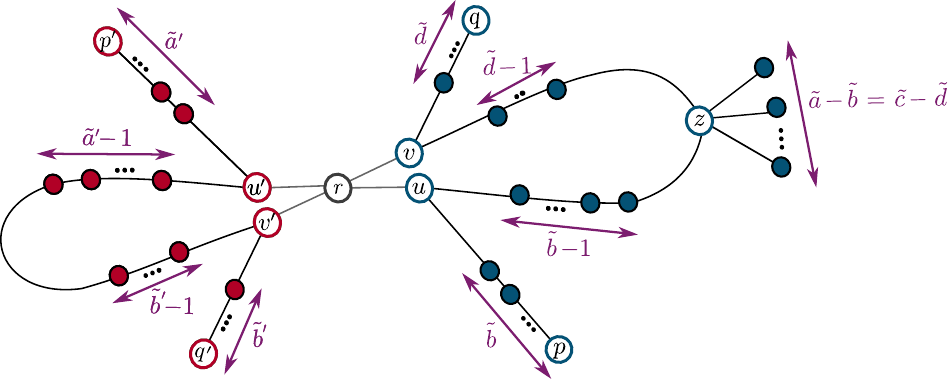} 
        \subcaption{The corresponding instance $\mathcal{T}=(G,r, \rho)$ of the \telebg constructed from $\mathcal{D}$. } 
        \label{fig:cdsToBSubgraphs}
    \end{minipage}
    \caption{An illustration of the construction of the \telebr instance corresponding to a \cds instance}
    \label{fig:domes_cdsToB}
\end{figure}

For any \cater $C$ constructed above (from either a singleton or a regular dome), we label the vertex at distances $\tilde{a}$ form $p$ as the \emph{first gate} of $C$, denoted as $u$, and the one at distance $\tilde{b}$ from $q$ as the \emph{second gate of} $C$ and denote it with $v$. We form the equivalent instance of the \telebg problem by forming a graph $G$ with a \pistil vertex $r$ (source) that is connected to all gates of all reduced caterpillars formed by the domes of $D$. By Definition~\ref{def:ourflowergraph}, $G$ will be a \flower graph with center $r$. Figure~\ref{fig:domes_cdsToB} provides an illustration of this reduction.

\subparagraph*{Correctness.}
Before proving the correctness of our reduction, we prove the following lemmas for broadcasting in the reduced caterpillars associated with singleton and regular domes, respectively. 

\begin{restatable}{lemma}{caterbr}\label{lemma:broudguess:singletondomeiff}
Consider a \cater $C$ of a singleton dome $D=(a,b)$ with gates $u$ and $v$. One can complete broadcasting in $C$ within $\rho$ rounds if and only if $u$ is informed no later than time $a$ and $v$ is informed no later than time $b$.
\end{restatable}
\begin{proof}
    First, suppose the first gate $u$ is informed at time $a'\leq a$ and $v$ is informed at time $b'\leq b$. We explain how broadcasting in $C$ can be completed within $\rho$ rounds. For that,  $u$ (respectively $v$) first informs its neighbor on its path to the special vertex $p$ (respectively $q$) and then its other neighbor on its path towards $v$ (respectively $u$). Given that the distance of $u$ to $p$ is $\tilde{a}$ (respectively the distance of $v$ to $q$ is $\tilde{b}$), all vertices on the path between $u$ and $p$ (respectively $v$ to $q$) will be informed by round $a+\tilde{a} = \rho$ (respectively $b + \tilde{b} = \rho$). See Figure~\ref{fig:singleDome_broad} for an illustration.

    Next, assume broadcasting in $C$ has been completed by $\rho$ rounds. Given that $p$ (respectively $q$) must receive the message through $u$ (respectively $v$), it must be that $u$ (respectively $v$) is informed by round $\rho - \tilde{a} = a$ (respectively $\rho - \tilde{b} = b$).
\end{proof}

\begin{lemma}\label{lemma:broudguess:regulardomeiff}
    Consider a \cater $C$ of a regular dome $D=(a,b,c,d)$ with gates $u$ and $v$. One can complete broadcasting in $C$ within $\rho$ rounds if and only if one of the following happens: 
    \begin{itemize}
        \item  (i) $u$ is informed by time $b$ and $v$ is informed by time $c$. 
        \item (ii) $u$ is informed by time $a$ and $v$ is informed by time $d$.
    \end{itemize}
   
\end{lemma}

\begin{proof}
First, we show that if either (i) or (ii) hold, then broadcasting completes within $\rho$ rounds.

\begin{itemize}
    \item 
Suppose (i) holds, that is, $u$ and $v$ are informed by times $b$ and $c$, respectively. 
We describe a broadcast scheme $S$ that completes within $\rho$ rounds. See Figure~\ref{fig:regular_broadright} for an illustration. 

In $S$, vertex $u$ first informs the neighbor on its path towards $p$, and then it informs the neighbors on its path towards $z$. This means that, by round $\rho$, all $\tilde{b}$ vertices on the path between $u$ and $p$ will be informed. Similarly, $\tilde{b}-1$ vertices on the path between $u$ and $z$ are informed (this excludes $z$). 
On the other hand, $v$ first informs the neighbor on its path towards $z$ and then the neighbor on its path towards $q$. 
This means that, by round $\rho$, all $\tilde{d}$ vertices on the path between $v$ and $q$ will be informed; this is because the number of rounds between the time that $v$ is informed and $\rho$ is $\tilde{c}$, and since $v$ first informs the neighbor towards $z$, $\tilde{c}-1$ rounds remain for informing vertices between $v$ and $q$. This number of rounds is sufficient because the length of the path between $v$ and $q$ is $\tilde{d}\leq \tilde{c}-1$. This way, $z$ will receive the message at time $c+\tilde{d}$, and assuming $z$ informs its uninformed leaves in arbitrary order, the broadcasting in $S$ completes within $c + \tilde{d} + \tilde{c} - \tilde{d} = \rho$ rounds.

\item Next, assume (ii) holds, that is, $u$ and $v$ are respectively informed by rounds $a$ and $d$. See Figure~\ref{fig:regular_broadleft} for an illustration. In this case, $u$ first informs its neighbor on the path towards $z$ and then its other neighbor on the path towards $p$. 
This means that by round $a+\tilde{b} + 1 \leq a+\tilde{a} = \rho$, all vertices on the path between $a$ and $p$ will be informed. Similarly, by round $a$, all the $\tilde{b}$ vertices on the path between $u$ and $z$ (including $z$) will be informed. Thus, all neighbors of $z$ will be informed by $a+\bar{b} + \bar{a}-\bar{b} = \rho$.

\begin{figure}
    \centering
        \includegraphics[width=0.27\linewidth]{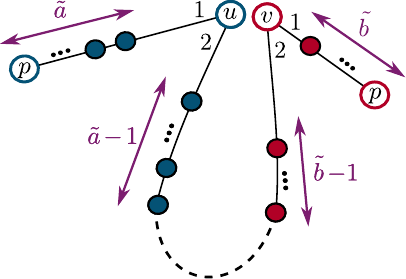} 
        \caption{
        The broadcast scheme of the \cater 
        associated with a singleton dome $(a,b)$.
        Vertex $u$ (respectively $v$) is responsible for informing the vertices highlighted in blue (respectively red).}
	\label{fig:singleDome_broad}
\end{figure}

On the other hand, $v$ first informs its neighbor on the path toward $q$ and then its neighbor on the path towards $z$ (excluding $z$). Thus, all vertices on the path between $v$ and $q$ will be informed by $d+\tilde{d} = \rho$ while vertices on the path between $v$ and $z$ (excluding $z$) will be informed by time $d + 1 - (\tilde{d}-1) = \rho$. 
\end{itemize}

\begin{figure}
    \centering
    \begin{minipage}[b]{0.47\textwidth}
        \centering
        \includegraphics[width=0.6\linewidth]{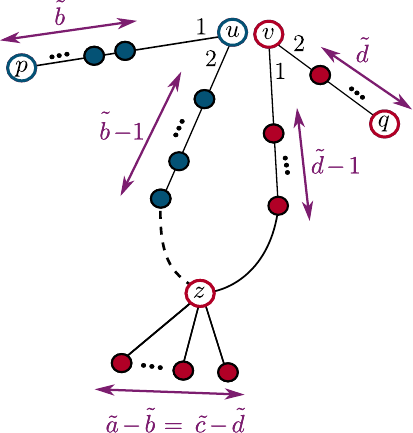} 
        \subcaption{
        $u$ and $v$ are informed at rounds $b$ and $c$, respectively. This corresponds to selecting arc $(b, c)$ in the associated dome.
        }  
        \label{fig:regular_broadright}
    \end{minipage}
    \hfill
    \begin{minipage}[b]{0.47\textwidth} 
        \centering
        \includegraphics[width=0.6\linewidth]{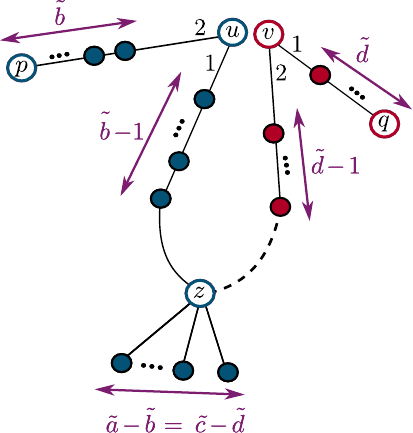} 
        \subcaption{$u$ and $v$ are informed at rounds $a$ and $d$, respectively. This corresponds to selecting arc $(a,d)$ in the associated dome. } 
        \label{fig:regular_broadleft}
    \end{minipage}
    
    \caption{
    Two possibilities for informing vertices of a \cater (associated with a regular dome). The gates are $u$ and $v$, where vertex $u$ (respectively $v$) is responsible for informing the vertices highlighted in blue (respectively red).}
    \label{fig:regular_broadcastcases}
\end{figure}

Next, assume that there is a broadcast scheme $S$ that completes within $\rho$ rounds. We want to show either (i) or (ii) holds. First, note that in the broadcast tree of $S$, vertices $u$ and $v$ cannot be in an ancestor-descendent relationship. Otherwise, if $u$ is an ancestor of $v$ (respectively $v$ is an ancestor of $u$), then $q$ (respectively $p$) receives the message no earlier than round $\tilde{b} + 2\tilde{d} > \rho$
(respectively $2\tilde{b} + \tilde{d} > \rho$). Second, we note that $u$ (respectively $v$) cannot receive the message later than round $b$ (respectively $d$); otherwise $p$ (respectively $q$) will receive the message after round $\rho$. Moreover, since $z$ has $\tilde{a} - \tilde{b}$ (which equals $\tilde{c}-\tilde{d}$) neighbor that all are leaves, it must receive the message by round $\rho - (\tilde{a} - \tilde{b}) = a + \tilde{b}$ (which equals $\rho - (\tilde{c} - \tilde{d}) = c + \tilde{d}$) to complete broadcasting within $\rho$ rounds. This means that either $u$ is informed by round $a$ or $v$ is informed by round $c$ (otherwise, $z$ will be informed too late). We conclude that either (i) $u$ is informed by $a$ and $v$ is informed no later than $d$  or (ii) $u$ is informed by $b$ and $v$ is informed by $c$. 
\end{proof}

We are now ready to prove the main result of this section.

\begin{theorem} \label{thm:hardness-flower}
    \telebr problem is NP-complete for \flower graphs.
\end{theorem}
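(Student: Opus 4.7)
The plan is to assemble the chain of reductions already built up in this section. For membership in NP, a broadcasting schedule is a polynomial-size certificate: we verify in polynomial time, by direct simulation, that the schedule informs every vertex within $\mathcal{B}$ rounds and that each informed vertex calls at most one neighbor per round.

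For NP-hardness, I would compose the four polynomial-time many-one reductions proved by Lemmas~\ref{thm:1}, \ref{thm:TIS2DOME}, \ref{theorem1_dspr_to_cds}, and \ref{theorem_cdsToB}, yielding the chain
\[
\text{$3,4$-SAT} \;\leq_p\; \tis \;\leq_p\; \dspr \;\leq_p\; \cds \;\leq_p\; \telebg,
\]
where the final target is an instance on the graph $G$ output by the construction of Section~\ref{construction_cdsToB}. Since $3,4$-SAT is NP-complete by~\cite{tovey1984simplified} and each reduction preserves satisfiability and runs in polynomial time, the composition is a polynomial-time reduction from an NP-complete problem to \telebg on graphs of the form~$G$.

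The remaining obligation is to verify that every such $G$ is a \flower graph in the sense of Definition~\ref{def:ourflowergraph}. This is a direct inspection of the construction. The graph is built by attaching one subgraph per dome to a common vertex $r$, and each subgraph contains exactly one simple cycle (the one through $u$, $v$, and $r$), so $G$ is a cactus with center $c := r$. After deleting $r$, each dome subgraph becomes a tree: a spine from $u$ to $v$ together with pendant paths ending at $p$ and $q$, plus (for regular domes) a bundle of extra leaves attached to the inner vertex $z$. Choosing $x = p$ and $y = q$ together with the caterpillar's third special vertex being the inner vertex $z$ (or, for singleton domes where no inner $z$ exists, any interior spine vertex distinct from $u$ and $v$) witnesses that each tree is a reduced caterpillar, since every vertex either lies on the $p$-to-$q$ path or is adjacent to $z$. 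Moreover, $r$ is joined in each component precisely to $u$ and $v$, and these two vertices are distinct from $p$, $q$, and $z$, matching the flower condition exactly.

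The conceptually hard work has already been done inside the four preceding lemmas, which encode the additive prefix constraints of \dspr and the clause--literal interplay of $3,4$-SAT into the broadcasting deadline $\mathcal{B} = 2m$. The present theorem is, modulo the structural check above, a straightforward assembly of those results; the main thing to be careful about is to confirm that the particular final construction does not accidentally produce extra cycles or violate the ``exactly two neighbors of $c$, none special'' clause, which the explicit description of the regular- and singleton-dome gadgets rules out.
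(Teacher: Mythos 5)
Your proposal is correct and follows essentially the same route as the paper: compose the four reductions from Lemmas~\ref{thm:1}, \ref{thm:TIS2DOME}, \ref{theorem1_dspr_to_cds}, and \ref{theorem_cdsToB} starting from \sssattt, and note membership in NP (the paper cites~\cite{slater1981nptree}, you give the certificate argument directly). Your explicit verification that the final construction yields a \flower graph is a welcome addition that the paper relegates to the construction section rather than restating in the theorem's proof, but the underlying argument is the same.
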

\begin{proof}
We use the construction mentioned in Section~\ref{subsection:telebghard} to reduce any instance $\mathcal{D}=(D,m)$ of \dspr to an instance $\mathcal{T}=(G,r,\rho)$. 
Note that each dome $D_i$ in $\mathcal{D}$ is bijected to a \cater $C_i$ in $G$. The two possibilities for selecting an arc from a regular dome in $D_i$ translate to two possibilities for informing the gates of $C_i$.

Assume there is a valid solution $S$ for $\mathcal{D}$ in \dspr. We will show there is a broadcasting scheme $S'$ for $\mathcal{T}$ that completes within $\rho$ rounds. For that, we sort all endpoints of all arcs included in $S$ in the non-decreasing order. Consider any dome $D_i \in D$, and let $e_1$ and $e_2$ be the endpoints of the arc selected from $D_i$ in $S$. Assume $e_1$ and $e_2$ have ranks $i_1$ and $i_2$ in the sorted order. In the broadcast scheme $S'$, the source $r$ informs the gates of \cater $C_i$ associated with $D_i$ at rounds $i_1$ and $i_2$. Given that ranks of all arc endpoints in $S$ are distinct, $r$ informs at most one neighbor at each given round. 

Next, we show that $S'$ completes within round $\rho$.
Given that $S$ is a valid solution for $\mathcal{D}$, for any $t\in m$, we have $\mathcal{N}_S(t) \leq t$, where $\mathcal{N}_S(t)$ is the number of arc endpoints in $S$ that are at most $t$. On the other hand, for any $e$ that is the endpoint of an arc in $S$, we have $\rrank(e) \leq \mathcal{N}_S(e)$ (be definition, $\rrank(e)$ is upper bounded by the number of arc endpoints in $S$ that are at most $e$, while $\mathcal{N}_S(e)$ is exactly the number of endpoints that are at most $e$). We conclude that $\rrank(e) \leq e$. 
Therefore, the gates of any \cater $C_i$ in $G$ associated with a singleton dome $D_i = (a,b)$ in $D$ receive the message in $S'$ by rounds $(a,b)$. Similarly, the gates of any \cater
$C_i$ associated with a regular
dome $D_i = (a,b,c,d)$ receive the message in $S'$ by rounds $(a,d)$ (if the arc $(a,d) \in S$) or by rounds $(b,c)$ (if the arc $(b,c)$ is in $S$). Therefore, by Lemma~\ref{lemma:broudguess:singletondomeiff} and \ref{lemma:broudguess:regulardomeiff}, broadcasting in $S'$ completes by round $\rho$.

For the other side of the reduction, suppose there is a broadcast scheme $S'$ for $\mathcal{T}$ that completes within $\rho$ rounds. We will explain how to construct a valid solution $S$ for $\mathcal{D}$. 
Consider any \cater $C_i$ in $G$ associated with a dome $D_i \in D$. 
Let $(\alpha_i,\beta_i)$ be the rounds that \pistil $r$ informs the gateways of $C_i$.
Now, if $D_i = (a,b)$ is a singleton dome, by Lemma~\ref{lemma:broudguess:singletondomeiff}, it must be that the gates of $C_i$ are informed by rounds $(a,b)$, that is $\alpha \leq a$ and $\beta \leq b$. In this case, the single arc of $D_i$ is included in $S$. 
Next, if $D_i = (a,b,c,d)$ is a regular dome, by Lemma~\ref{lemma:broudguess:regulardomeiff}, it must be that the gates of $C_i$ are informed by either rounds $(a,d)$ or $(b,c)$, that is either ($\alpha \leq a$ and $\beta \leq d$) or ($\alpha\leq b$ and $\beta \leq c$) must hold. 
We will include arc $(a,d)$ in $S$ in the former case and $(b,c)$ in the latter case. 
In other words, any arc endpoint $x$ in $S$ is associated with a gate that is informed within round $x$ in $S'$. 
To show that $S$ is a valid solution for $\mathcal{D}$, we show that for any $t\in [m]$, we have $\mathcal{N}_S(t) \leq t$. 
Consider the set of endpoints in $D$ that contribute to $\mathcal{N}_S(t)$. As mentioned above, any of these endpoints is associated with a gate that is informed by $r$ within round $t$. Given that $r$ informs up to $t$ gates by round $t$, it must be that $\mathcal{N}_S(t) \leq t$. 
We conclude that $S$ is a valid instance of $\mathcal{D}$. 

One can verify that the size of the graph $G$ in the \telebg instance is polynomial on $n=|D|$ and $m$. Therefore, given that the \dspr is NP-hard by  Lemma~\ref{thm:TIS2DOME}, we conclude that \telebg is NP-hard. \telebr in general graphs is known to be in NP \cite{slater1981nptree}. Together, these results establish the NP-completeness of \telebg in \flower graphs.
\end{proof}

\section{Constant-Factor Approximation for Bounded Pathwidth}
\label{sec:pathwidth-approx}

Graphs of pathwidth $1$ are caterpillars~\cite{ProskurowskiT99}, which are special types of trees, for which the \telebr problem is solvable in linear time \cite{fraigniaud2002polynomial}. On the other hand, for graphs with a pathwidth larger than $1$, the \telebr problem is NP-hard, as established by our result for the hardness of the problem in \flower graphs (Theorem~\ref{thm:hardness-flower}). Recall that \flower graphs have pathwidth 2 (Observation~\ref{obs:flower-pathwidth}). 

In this section, we establish the existence of a constant-factor approximation for \telebr on graphs with bounded pathwidth. Recall that we use $\bropt(G,s)$
to denote the optimal broadcasting time for an instance $(G,s)$. We will demonstrate that the algorithm of~\citet{elkin2006sublogarithmic}, which has an approximation factor of $\oh\left(\frac{\log n}{\log \bropt(G, s)}\right)$ for any graph $G$, achieves a constant factor approximation for graphs of bounded pathwidth. For general graphs, this algorithm has an approximation factor of $\oh(\log n/ \log \log n)$ (because $\bropt(G,s)\geq  \log n$), which is the best known approximation factor.  

For any graph $G$ of pathwidth $\pwidth$, we will show that $\bropt(G,s) = \Omega(n^{4^{-(\pwidth+1)}})$, which establishes that the algorithm of~\citet{elkin2006sublogarithmic} has a constant factor approximation for graphs of bounded pathwidth $\pwidth$.

To find a lower bound for $\bropt(G,s)$ where $G$ is a graph of constant pathwidth, we repeatedly remove a vertex from $G$ and show that the broadcasting in the remainder of $G$ is not much slower compared to $G$. For that, we will use the following lemma, which holds for any instance of \telebr (but we only use it for graphs of bounded pathwidth). 

\begin{restatable}{lemma}{constantBRminus}\label{lemma:brminus}
Consider any instance $(G,s)$ of \telebr, and let $v$ be any arbitrary vertex in $G$. Let $H_1,\ldots, H_m$ be the connected components resulting from removing $v$ from $G$ ($m\geq 1$). 
For any $i \in [m]$, choose an arbitrary vertex $s_i\in H_i$.
Then, the following inequality holds: $\sum_{i\in[m]} \bropt(H_i, s_i) \leq \bropt(G,s)(2\bropt(G,s) + 1)$.
\end{restatable}


\begin{proof}
Consider the optimal broadcast tree $T^*$ that completes broadcasting in $(G,s)$ in $\bropt(G,s)$ rounds. Suppose $v$ has $d$ children $u_1, \dots, u_d$ in $T^*$ for some $d \geq 0$. Removing $v$ from $T^*$ results in $d+1$ subtrees, $\tau_0, \dots, \tau_d$, where $\tau_0$ is rooted at $s$ and $\tau_i$ is rooted at $u_i$ for each $i \in [1,d]$. Note that all vertices in any fixed $\tau_j$ belong to the same connected component $H_i$.
Let $\tau^i_1, \dots, \tau^i_{p_i}$ be the subtrees that form the connected component $H_i$ for some $p_i \geq 0$. The indexing is defined in a way to ensure that $s_i \in \tau^i_1$, and there is an \emph{auxiliary edge} between a vertex in $\tau^i_{j}$ and a vertex in $\tau^i_{j'}$ for some $j' \leq j-1$ (this is possible because $H_i$ is connected). Figure~\ref{fig:designatedNodesExample} provides an illustration.

     \begin{figure}
	\centering
	\includegraphics[width=0.35\linewidth]{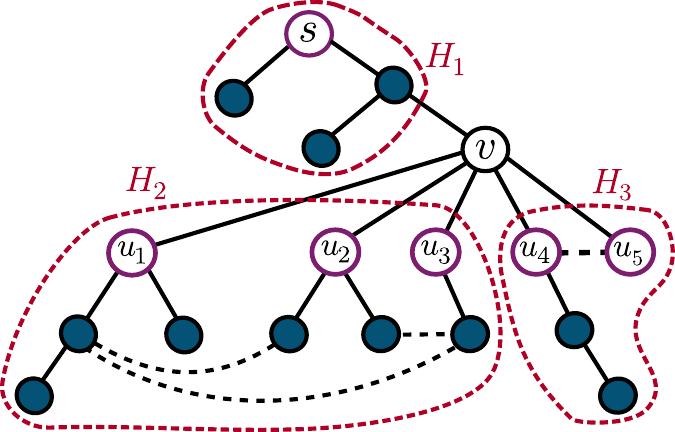}
	\caption{An illustration of the proof of Lemma~\ref{lemma:brminus}. Auxiliary edges are dotted. 
    In this example, the connected components $H_1$ to $H_3$ formed after removing $v$ are shown. 
    }
    
    \label{fig:designatedNodesExample}
    \end{figure}
    
A broadcasting scheme for $H_i$ can be formed as follows:
\begin{itemize}
\item The message is transmitted from $s_i$ to the root of $\tau^i_1$ within $\bropt(G,s)$ rounds.
\item Broadcasting within $\tau^i_1$ completes in at most $\bropt(G,s)$ rounds, in a similar way that the message is broadcasted in $T^*$.
\item The message is sent to the next subtree $\tau^i_2$ along an auxiliary edge in one extra round.
\end{itemize}

The above process takes at most $2\bropt(G,s) +1$ rounds. 
Similarly, each subtree $\tau^i_j$ takes at most $2\bropt(G,s)+1$ rounds to receive the message and broadcast it within $\tau^i_j$. Using this procedure iteratively, broadcasting within $(H_i, s_i)$ completes in at most $p_i(2\bropt(G,s)+1)$ rounds.
Summing over all $H_i$, we get

\scalebox{.9}{
\begin{minipage}{1.2\textwidth}
\begin{align*}
    \sum_{i \in[m]} \bropt(H_i,s_i) = (2\bropt(G,s)+1) \sum_{i \in[m]} p_i.
\end{align*}
\vspace{1mm}
\end{minipage}}

Note that $\sum_{i\in[m]} p_i$ is a lower bound for $\bropt(G,s)$ since it is the degree of $v$ in $T^*$. Then we conclude

\scalebox{.9}{
\begin{minipage}{1.2\textwidth}
\begin{align*}
    \sum_{i \in[m]} \bropt(H_i,s_i) \leq \bropt(G,s)(2\bropt(G,s)+1).
\end{align*}
\end{minipage}}
\end{proof}

For completeness, we show that Lemma~\ref{lemma:brminus} is asymptotically tight. 

\begin{observation}\label{observation:tighti}
    There are instances of the \telebr problem $(G,s)$ and vertex $v \in G$ for which Lemma~\ref{lemma:brminus} is asymptotically tight.
\end{observation}

\begin{proof}
    We provide $(G,s)$ and $v\in G$ such that if removing $v$ from $G$ results in connected components $H_1,\ldots, H_m$, then $\sum_{i\in [m]} \bropt(H_i,s_i) = \Omega( \bropt(G,s)^2)$. Let $G$ be a \emph{fan graph} formed by a path $P$ of $n-1$ vertices, where each vertex $i \in [1,n-1]$ is connected to a \emph{center} vertex $c$. Suppose $v=c$ and $s$ is an endpoint of $P$. 
    We claim that $\bropt(G, s) = \oh(\sqrt{n})$. 
    This broadcast time can be achieved by a broadcast tree in which $v$ has $\Theta(\sqrt{n})$ children, and any neighbor of $v$ in the tree is responsible for informing $\Theta(\sqrt{n})$ other vertices (see Figure~\ref{fig:fangraph-example}). On the other hand, $G \setminus \{v\}$ has only one connected component, which is a path $H_1$ in which, setting the source as $u_1$ gives $\sum_{i\in[m]} \bropt(G_i,s_i) = \bropt(H_1,u_1) = n-1 \in \Omega(\bropt(G,s)^2)$. 
\end{proof}

\begin{figure}
	\centering
\includegraphics[width=0.35\linewidth]{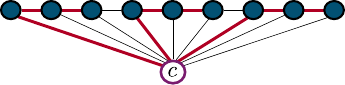}
	\caption{An illustration of Observation~\ref{observation:tighti}}
    \label{fig:fangraph-example}
\end{figure}

The main result of this section can be stated as follows.

\begin{restatable}{theorem}{constanttheorem}\label{thm:approximation-constant}
There is a polynomial-time algorithm for \telebr in graphs with constant pathwidth $\pwidth$, achieving an approximation ratio of $\oh(4^{\pwidth})$.
\end{restatable}

\paragraph{Proof Overview.}
As mentioned above, it suffices to prove a lower bound of $\Omega(n^{4^{-(w+1)}})$ for $\bropt(G,s)$. 
For that, we assume $G$ is a ``$\pwidth$-path'' in the sense that any two vertices that share a bag are neighbors. If $G$ is not a $\pwidth$-path, we will add the missing edges to get a $\pwidth$-path without increasing the pathwidth.
 Clearly, the addition of new edges does not decrease the broadcast time of $G$, and since we are looking for a lower bound for the broadcast time, it suffices to focus on $\pwidth$-paths.

We assume a \emph{standard path decomposition} of the input graph $G$, in which no bag is a subset of another. In such decompositions, we define the notion of \emph{span} of a vertex $v$ as the number of bags where $v$ appears. The span of a decomposition is then the maximum span over the spans of all its vertices. 

The proof of lemma is established by induction over the size of the input graph $G$. For that, we consider two cases. First, if the span of the decomposition is ``small'', we argue that the diameter of $G$ will be ``large'', and then the desired lower bound for $\bropt(G,s)$ holds.
On the other hand, when the span is ``large'', we consider the vertex $v_m$ that has the maximum span and  
consider the graph $G_{v_m}$ induced by vertices that share a bag with $v_m$. Note that $G_{v_m}$ is a smaller graph compared to $G$, and we can show that broadcasting in $G_{v_m}$, starting from any vertex, cannot be much slower than broadcasting in $G$, starting from $s$ (Lemma~\ref{lemma:br:inducedHG}). 
Moreover, we will use Lemma~\ref{lemma:brminus} to show that if we extract $v_m$ from $G_{v_m}$ to get a set $\{H_1,\ldots, H_q\}$ of disjoint connected components, total broadcast time in these component components (starting from arbitrary sources) is not much slower in the absence of $v_m$. On the other hand, since $v_m$ is removed from $G_{v_m}$, any of these components $H_i$ has a pathwidth that is at least one unit less than $G_{v_m}$, and we can use an inductive argument to achieve a lower bound on the broadcast times of any $H_i$ (Lemma~\ref{lemma:brlbsonleft}). In summary, by the induction hypothesis, total broadcast time in $H_i$'s gives a lower bound for broadcasting in $G_{v_m}$ which itself gives a lower bound for broadcasting in $G$. 

Now for the formal proof, we prove a lower bound for the broadcast time of any connected graph $G$ with $n$ vertices and pathwidth $\pwidth \in \oh(1)$. In particular, for any $s\in G$, we will show $\bropt(G,s) \in \Omega(n^{4^{-\pwidth}})$.
Consider a fixed path decomposition of $G$; we assume the bags in the deposition are arranged from left to right. For now, suppose $s$ is in the left-most bag (this assumption will be relaxed later in Lemma~\ref{lemma:brlbsonleft}). 

We assume the path decomposition is ``standard'' in the sense that no bag is a subset of another bag (otherwise, one can remove the smaller bags to attain a standard decomposition). Finally, we assume that $G$ is a ``$\pwidth$-path'' in the sense that any two vertices located in the same bag are connected in $G$ (all edges allowed in the decomposition are present). If the input graph is not a $\pwidth$-path, we can make it $\pwidth$-path by adding all the missing edges. Clearly, the addition of new edges does not decrease the broadcast time of $G$, and since we are looking for a lower bound for the broadcast time, it suffices to focus on $\pwidth$-paths. 

We now proceed with a formal proof.

\begin{lemma}\label{lemma:br:inducedHG}
    Let $(H,s)$ be an instance of the broadcast problem, where $H$ is a connected $\pwidth$-path. Suppose we have a standard path decomposition $D$ of width $\pwidth$ for $H$ where $s$ appears in the first bag of $D$. Let $G$ be an induced subgraph of $H$ formed by vertices that appear in a consecutive set of bags in $D$ and suppose a vertex $s'\in G$ appears in all such bags. Then, we can write $\bropt(G,s') \leq \bropt(H,s)+2\pwidth$.
\end{lemma}
    
\begin{proof}
 Suppose $T_H$ is an optimal broadcast scheme for the instance $(H,s)$. We will construct a broadcast scheme $T_G$ for $(G,s')$ as follows. See Figure~\ref{fig:decomposition_bag_graphs} for an illustration. Update $T_H$ by removing every vertex that is not a part of $G$. Also, if the parent of $s'$ in $T_H$ is also in $G$, remove the edge between them in the updated tree.
 This results in a forest of at most $2\pwidth+1$ subtrees of $T_H$ that contains all vertices in $G$. 
 This is because the roots of these subtrees can only be in the first and last bag of $G$, and at most $2\pwidth+1$ vertices in $G$ are located in these bags (note that $s'$ is in both of the bags). 
 Add edges from $s'$ to the roots of all the subtrees. It is possible to add such edges because $s'$ is connected to every vertex in $G$ as it appears in all bags of $G$, and also, all vertices that share a bag are connected in $H$ (because $H$ is a $\pwidth$-path).
 The result would be a connected tree $T_G$, spanning $G$ and rooted at $s'$. 
 
We show broadcasting in $T_G$, starting from $s'$, does not take more than $2\pwidth$ extra rounds than broadcasting in $T_H$, that is, $\bropt(G,s') \leq \bropt(H,s)+2w$. This holds because $s'$ can inform all roots of the subtrees first within  $2\pwidth$ rounds. The remainder of broadcasting is conducted similarly in $T_H$ and $T_G$. 
\end{proof}

\begin{figure}
    \centering
    \begin{minipage}[b]{0.5\textwidth}
        \centering
        \includegraphics[width=\linewidth]{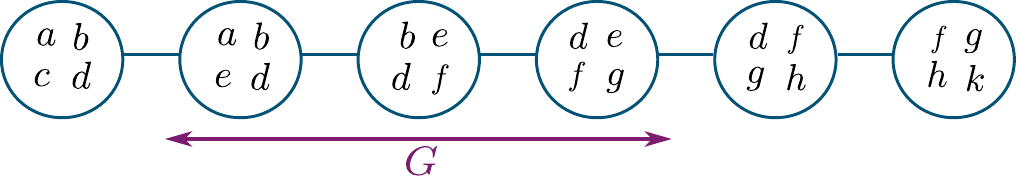} 
        \subcaption{Standard path decomposition for a graph $H$} 
        \label{fig:decomposition_bags}
        
    \end{minipage}
    \hfill
    \begin{minipage}[b]{0.2\textwidth} 
        \centering
        \includegraphics[width=\linewidth]{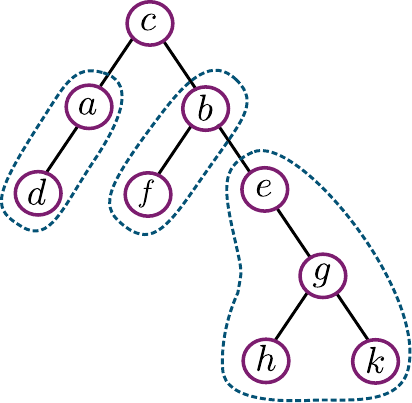} 
        \subcaption{A broadcast scheme $T_H$ for $(H,s)$} 
        \label{fig:decomposition_graph}
    \end{minipage}
    \hfill
    \begin{minipage}[b]{0.25\textwidth} 
        \centering
        \includegraphics[width=.5\linewidth]{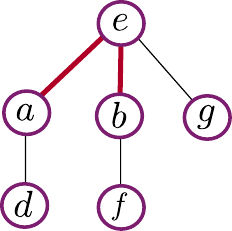} 
        \subcaption{The associated broadcast scheme $T_G$ for $(G,s')$} 
        \label{fig:decomposition_graph2}
    \end{minipage}
    \caption{An illustration of Lemma \ref{lemma:br:inducedHG}}
    \label{fig:decomposition_bag_graphs}
\end{figure}

To establish a lower bound for broadcasting in any connected $\pwidth$-path graph $G$, we prove a slightly stronger result. Suppose we have $k$ connected $\pwidth$-path graphs $H_1, H_2,\ldots, H_k$, each $H_i$ having at least two vertices, a standard path decomposition of width  $\pwidth$ and a vertex $s_i \in H_i$ located at the left-most bag in its decomposition. We prove a lower bound for the total time for broadcasting in these graphs as follows. This is a generalization of our desired lower bound for the case of $k=1$.

\begin{lemma} \label{lemma:brlbsonleft}
    Consider a set of $k\geq 1$ vertex disjoint connected $\pwidth$-path graphs $\mathcal{H} = \{H_1, \ldots, H_k\}$. Suppose any $H_i$ has a standard path decomposition $D_i$ with $\ell_i$ bags, where a source vertex $s_i$ is located in the leftmost bag of $D_i$. Let $\bigell = \sum_{i=1}^k \ell_i$. 
    Then we have
    $\sum_{i=1}^k \bropt(H_i, s_i) \geq f(\bigell, \pwidth)$, where $f(\bigell, \pwidth) = 27^{-\pwidth}(\pwidth!)^{-2}\bigell^{4^{-\pwidth}}$.
\end{lemma}

\begin{proof}
    Let $n$ denote the total size (number of vertices) in all $H_i$'s. We use an induction on $n$ to prove the lemma. In the base of induction, we consider the cases where $n\leq 3\pwidth$, which implies $\bigell \leq 3\pwidth$. Therefore, by the definition for $f(\bigell, \pwidth)$, we have $f(\bigell, \pwidth)\leq 1$. 
    On the other hand, we have $\sum_{i=1}^k \bropt(H_i, s_i) \geq 1$; this is because $H_1$ is formed by at least two vertices, and its broadcast time is at least $1$. Thus, $f(\bigell,\pwidth) \leq \bropt(H_1,s_1) \leq \sum_{i=1}^k \bropt(H_i,s_i)$, and the base of induction holds. 

    In the induction step, we consider two cases, namely $k=1$ and $k>1$.
First, suppose $k>1$. Thus, each $H_i$ has a size $|H_i|< n$. 
    Therefore, we can apply the induction hypothesis on a set of graphs formed by only one $H_i$ (for each $i\in [k]$). This gives $\bropt(H_i,s_i) \geq f(\ell_i,\pwidth)$. 
    Summing over all $i$'s and observing that the function $f(\bigell, \pwidth)$ is sublinear in $\bigell$, we can conclude

\scalebox{.9}{
    \begin{minipage}{1.2\textwidth}
        \begin{align*}
            \sum_{i=1}^k \bropt(H_i, s_i) \geq \sum_{i=1}^k f(\ell_i,\pwidth) \geq f(\bigell,\pwidth).
        \end{align*}
        \vspace{1mm}
\end{minipage}}
    Next, suppose $k=1$. Consider the bags in the standard path decomposition $D_1$ containing any vertex $v$. By the definition of path decomposition, these bags form a consecutive set of bags in $D_1$; we refer to this set as the \textbf{span} of $v$ and denote it by $\sspan(v)$. Let $v_m$ be the vertex with maximum span, i.e., $v_m = \arg \max_v |\sspan(v)|$, and let $M=|\sspan(v_{m})|-1$. For example, in Figure~\ref{fig:decomposition_bags}, we have $span(a) = 2$, $v_m = d$, and $M=3$. We analyze two cases based on the value of $M$. 

    \subparagraph*{Case $\mathbf{1}$: $\mathbf{M}$ is small.}  Suppose $M \leq \frac{\bigell - 1}{f(\bigell, \pwidth)}$ and let $u$ denote any vertex that only appears in the rightmost bag of $D_1$ (since $D_1$ is a standard path decomposition, at least one such vertex $u$ exists). We claim the distance between the source $s$ and $u$ is large enough to establish the desired lower stated in the lemma. Let the shortest path between $ s$ and $u$ be $\langle s(=u_0), u_1,\ldots, u_d(=u)\rangle$. For any $j\in[d]$ given that the span of $u_j$ is at most $M$ and $u_j$ is connected to $u_{j+1}$, it must be that $u_{j+1}$ appears for the first time in a bag within distance $M$ (in $D_1$) of the bag where $u_j$ was first introduced. It means the total number $\bigell$ of bags cannot be more than $dM+1$ or $d\geq (\bigell-1)/M$. Given that the distance between $s$ and $u$ in $H_1$ is an upper bound for the broadcast time, we can conclude 

    \scalebox{.9}{
\begin{minipage}{1.2\textwidth}
    \begin{align*}
        \sum_{i=1}^k \bropt(H_i, s_i)  = \bropt(H_1, s_1)\geq (\bigell-1)/M \geq f(\bigell, \pwidth).
    \end{align*}
    \vspace{1mm}
\end{minipage}}
    
    The last inequality holds because of the assumption that $M\leq \frac{\bigell - 1}{f(\bigell,\pwidth)}$.

    \subparagraph*{Case $\mathbf{2}$: $\mathbf{M}$ is large.} Suppose $M > \frac{\bigell - 1}{f(\bigell, \pwidth)}$. Let $B_s$ and $B_e$ be the first and the last bag of $D_1$ in which $v_m$ appears, respectively. 
   Consider the subgraph $G$ induced by all vertices that appear in any of the 
   bags between $B_s$ and $B_e$ (inclusive of all vertices in $B_s$ and $B_e$). 
    Remove from $G$ the vertex $v_m$; the result would be a graph $G' = \{G'_1,G'_2,\ldots, G'_r\}$, which is a set of  $r(\geq 1)$ connected components which are possibly singletons (a graph with one vertex). Let $c$ denote the number of such singletons ($c\geq 0$) and consider a graph resulting from removing these $c$ singletons from $G'$. We denote the result as $\mathcal{H'}=\{H'_1, H'_2, \ldots, H'_{r-c}\}$, which is a set of vertex-disjoint connected $\pwidth'$-paths. 
    Each $H'_i$ has a standard path decomposition $D'_i$ with $\ell'_i$ bags. Also, we consider an arbitrary $s'_i$ for each $H'_i$ that appears in the first bag of $D'_i$. 
    
    We have $\pwidth' \leq \pwidth-1$ because $v_m$ is removed from $G$ to form $G'$, and it has been present in all bags in the path decomposition of $G'$. Next, we consider two cases as follows based on the value of $c$.

\begin{itemize}
    \item 
    First, we establish the lemma when $c \geq f(\bigell, \pwidth) + 2$.
    We note that each singleton component, except the two that may appear in $B_s$ and $B_e$, finds $v_m$ as their sole neighbor in $H_1$ (this holds because all vertices that share a bag in $D_1$ are connected). Therefore, the degree of $v_m$ in any broadcast tree of $H_1$ is at least $c-2 \geq f(\bigell,\pwidth)$. We conclude that $    \sum_{i=1}^k \bropt(H_i, s_i)  = \bropt(H_1,s_1) \geq f(\bigell, \pwidth)$, and the lemma holds as desired. 

   \item Next, we consider the case $c < f(\bigell,\pwidth)+2$. Since $v_m$ appears in all bags from $B_s$ to $B_e$ in $D_1$, we can apply Lemma~\ref{lemma:br:inducedHG} 
   to conclude that 
   
   \scalebox{.9}{
\begin{minipage}{1.05\textwidth}
   \begin{align}
       \bropt(G,v_m) \leq \bropt(H_1,s_1) + 2\pwidth. \label{eq:gtoh}
   \end{align}
   \vspace{1mm}
\end{minipage}}

   Therefore, if we can show $\bropt(G,v_m) \geq f(\bigell,\pwidth)+2\pwidth$,
   it follows that $\sum_{i=1}^k \bropt(H_i, s_i)  = 
        \bropt(H_1, s_1)\geq \bropt(G,v_m) -2w \geq f(\bigell, \pwidth)$,
   which will establish the statement of the lemma as desired. 
   
   It remains to prove $\bropt(G,v_m) \geq f(\bigell,\pwidth)+2\pwidth$, for which we will use the induction hypothesis and Lemma~\ref{lemma:brminus} as follows. 
    The number of the vertices in $\mathcal{H}'$ is at most $n-1$; this is because $\mathcal{H}'$ misses $v$ (along with possibly other vertices) compared to $G$, and the size of $G$ is at most $n$. 
    Let $\bigell'=\sum_i\ell'_i$ be the total number of bags in $\mathcal{H}'$. 
    We have $\bigell'=M-c$ because every $H'_i \in \mathcal{H}'$ contains a consecutive set of bags of $G$, and $c$ singleton components, each of which located in exactly one bag, are excluded. Therefore, we can use the induction hypothesis to write
    
    \scalebox{.9}{
\begin{minipage}{1.2\textwidth}
    \begin{align*}
        \sum_{i} \bropt(H'_i, s'_i) 
        \geq \sum_{i} f(\ell_i', \pwidth - 1)
        \geq f(\bigell', \pwidth - 1).
    \end{align*}
    \vspace{1mm}
\end{minipage}}
    The last step is because function $f(\ell',\pwidth)$ is sublinear in $\ell'$ and $\sum_i \ell'_i = L'$.
    Moreover, if we apply Lemma~\ref{lemma:brminus} on $G$, when $v_m$ is removed from $G$, we can conclude 

\scalebox{.9}{
\begin{minipage}{1.05\textwidth}
\begin{align}
    \bropt(G, v_m)(2\bropt(G, v_m) + 1) \geq \sum_{j\in[p+c]}\bropt(G'_j,s'_j) = \sum_{i\in[r-c]}\bropt(H'_i,s'_i).
    \label{eq:br-removed}
\end{align}
\vspace{1mm}
\end{minipage}}

The last inequality holds because broadcasting in each of the singleton graphs that are removed from $G'$ to form $\mathcal{H'}$ takes $0$ round. Therefore, we can write
    
    \scalebox{.9}{
\begin{minipage}{1.2\textwidth}
    \begin{align*}
        3\bropt(G, v_m)^2 \geq& \bropt(G, v_m)(2\bropt(G, v_m) + 1) && \bropt (G, v_m) \geq 1\\
         =& \sum_{i\in[r-c]}\bropt(H'_i,s'_i) && \text{from~\eqref{eq:br-removed}} \\
        \implies \bropt(G, v_m) \geq& \big(\sum_{i \in [r-c]} \bropt(H'_i, s'_i)/3\big)^{1/2} \\
        \geq& \big(f(\bigell', \pwidth - 1)/3\big)^{1/2}. && \text{by the induction hypothesis}
    \end{align*}
    \vspace{1mm}
\end{minipage}}
We claim that $\bigell'\geq \sqrt{\bigell}$. Assuming this claim holds, continuing from the above inequality,
we can write

\scalebox{.9}{
\begin{minipage}{1.2\textwidth}
    \begin{align*}
        \bropt(G, v_m) \geq& \big(f(\bigell', \pwidth - 1)/3\big)^{1/2}\geq \big(f(\sqrt{\bigell}, \pwidth - 1)/3\big)^{1/2}\\
        \geq& \big(27^{-\pwidth+1}\bigell^{4^{-\pwidth+1}/2}((\pwidth-1)!)^{-2}/3\big)^{1/2} && \text{by definition of }f(.)\\
        \geq& 3(27)^{-\pwidth}\bigell^{4^{-\pwidth}}(\pwidth!)^{-2}\pwidth \\
        \geq& 3\pwidth f(\bigell, \pwidth) && \text{by definition of }f(.)\\ 
        \geq& f(\bigell, \pwidth) + 2\pwidth. && \text{because }f(L,\pwidth)\geq 1
    \end{align*}
    \vspace{1mm}
\end{minipage}}
    Recall that $\bropt(H_1,s_1) \geq \bropt(G,v_m) - 2\pwidth$ by Inequality \eqref{eq:gtoh} and combining this with the above inequality, we can conclude $\sum_{i=1}^k \bropt(H_i, s_i)  = 
 \bropt(H_1,s_1)\geq f(\ell,\pwidth)$. Now, it remains to prove that $\bigell'\geq \sqrt{\bigell}$.

    \scalebox{.9}{
\begin{minipage}{1.2\textwidth}
    \begin{align*}
        \bigell' & = M-c \\ 
        &> M-f(\bigell,\pwidth) -2 && c \text{ is assumed less than }  f(\bigell,\pwidth) +2 \\ 
        & >  \frac{\bigell - 1}{f(\bigell, \pwidth)} - f(\bigell, \pwidth) -2.  && M \text{ is assumed larger than } \frac{\bigell - 1}{f(\bigell, \pwidth)} 
        \end{align*}
        \vspace{1mm}
\end{minipage}}
    Therefore, to prove $L'\geq \sqrt{L}$ it suffices to prove $\frac{\bigell - 1}{f(\bigell, \pwidth)} - f(\bigell, \pwidth) -2 \geq \sqrt{\bigell}$, which we establish as follows.

    \scalebox{.9}{
\begin{minipage}{1.2\textwidth}
    \begin{align*}
    &\frac{\bigell - 1}{f(\bigell, \pwidth)} - f(\bigell, \pwidth) -2 \geq \sqrt{\bigell}\\ 
        \iff& \bigell - 1 \geq \sqrt{\bigell}f(\bigell, \pwidth) + f(\bigell, \pwidth)^2 + 2f(\bigell, \pwidth)\\
        \Longleftarrow & \bigell - 1 \geq 2\bigell^{3/4}/27 + \bigell^{1/2}/27^2 + 2\bigell^{1/4}/27 && \text{by definition of } f(\bigell, \pwidth)\\
        \Longleftarrow & \bigell - 1 \geq \bigell^{3/4}/3. 
    \end{align*}
    \vspace{1mm}
\end{minipage}}
    The last inequality is true for any $\bigell \geq 2$. 
    \qedhere
\end{itemize}
\end{proof}

The following lemma generalizes Lemma~\ref{lemma:brlbsonleft} (applied with $k=1$) to the case where the source is not necessarily located in the first bag of the path decomposition.

\begin{lemma}
\label{lemma:pwidth-ell}
   Let $G$ be any connected graph 
   and 
   let $s$ be any vertex of $G$. Suppose $G$ has a standard path decomposition of width $\pwidth$ formed by $\bigell$ bags. Then, we have $\bropt(G,s) \geq f(\bigell-1,\pwidth+1)$, where $f(\bigell, \pwidth) = 27^{-\pwidth}(\pwidth!)^{-2}\bigell^{4^{-\pwidth}}$.
\end{lemma}
\begin{proof}
Let $D = \langle B_1,\ldots, B_\bigell\rangle$ be the standard path decomposition of $G$ of width $\pwidth$, and $B_k$ be the leftmost bag that contains $s$ in $D$.
Add $s$ to any bag $B_x$ of $D$ for $x\in[k-1]$. The result would be a valid path decomposition $D'=\langle B'_1,\ldots, B'_{\bigell'}\rangle$ of width $\pwidth'$, which is not larger than $\pwidth+1$. To keep the path decomposition in the standard form, we may need to merge $B_k$ with $B_{k-1}$. 
Regardless, $\bigell'$ is at least $\bigell-1$. 

We then add edges between $s$ and all vertices of $G$ that appear in a bag $B_i$ where $i\leq k$; the result would be a new graph $G^+$ that is a $(\pwidth+1)$-path (recall that adding edges will not increase the broadcast time). Now, we have a \telebr instance $(G^+,s)$, where $G^+$ is a $(\pwidth+1)$-path and the source is located on $B_1$. Therefore, we can apply   
Lemma~\ref{lemma:brlbsonleft} on a single component $H_1=G^+$ ($k=1$), to conclude we have $\bropt(G,s) \geq \bropt(G^+,s) \geq f(\bigell',\pwidth')\geq f(\bigell-1,\pwidth+1)$. \end{proof}

We are now ready to prove the main result of this section.

\begin{proof}[of Theorem~\ref{thm:approximation-constant}]
Let $(G,s)$ be an instance of \telebr, where $G$ is a graph of pathwidth $\pwidth$. Consider a standard path decomposition $D$ of $G$, where each bag contains at most $\pwidth + 1$ vertices. Let $L$ be the number of bags in $D$ and note that $\bigell \geq \frac{n}{\pwidth + 1}$. This holds because each vertex has to be present in at least one bag. Combining this inequality with Lemma~\ref{lemma:pwidth-ell}, we will obtain

    \scalebox{.9}{
\begin{minipage}{1.06\textwidth}
    \begin{align}
    \label{bronpwidth}
        \bropt(G,s) \geq f(\bigell-1,\pwidth+1) = 27^{-(\pwidth+1)}((\pwidth+1)!)^{-2}\Big(\frac{n}{2(\pwidth + 1)}-1\Big)^{4^{-(\pwidth+1)}}.
    \end{align}
    \vspace{1mm}
\end{minipage}}

    Therefore, assuming $\pwidth = \oh(1)$, we can write $\bropt(G,s)=\Omega(n^{4^{-(\pwidth+1)}})$. The approximation ratio of the algorithm introduced by \citet{elkin2006sublogarithmic} is given by $\oh\left(\log n / \log \bropt(G, s)\right)$. Using Inequality \eqref{bronpwidth}, the approximation factor would be

\scalebox{.9}{
\begin{minipage}{1.2\textwidth}
        \begin{align*}
            \oh\Big(
        \frac{\log n}{
        \log n^{4^{-(\pwidth+1)}}} \Big)
        = \oh(4^\pwidth),
        \end{align*} 
        \vspace{1mm}
\end{minipage}}
        which completes the proof. 
\end{proof}

\section{Concluding Remarks}\label{section:concluding}

In this paper, we resolved an open problem by proving the NP-completeness of \telebr for cactus graphs as well as graphs of pathwidth $2$. We also established a $2$-approximation algorithm for cactus graphs and a constant-factor approximation algorithm for graphs of bounded pathwidth. A possible direction for future work is to improve the approximation factor for graphs of bounded pathwidth or cactus graphs. In particular, it remains an open question whether Polynomial Time Approximation Schemes (PTASs) exist for these graph classes. A major open problem in this domain is determining whether a constant-factor approximation exists for general graphs. While progress on this question has been slow, the algorithmic ideas developed in this work may be applicable to broadcasting in other families of sparse graphs.

\section*{Acknowledgement}
We acknowledge the support of the Natural Sciences and Engineering Research Council of Canada (NSERC)[funding reference number DGECR-2018-00059].

\bibliography{main}
\bibliographystyle{plainnat}





    


    
    

\end{document}